\newtheorem{theorem}{Theorem}
\newtheorem{remark}{Remark}
\newcommand{\be}{\begin{equation}}
\newcommand{\ee}{\end{equation}}
\newcommand{\beq}{\begin{eqnarray}}
\newcommand{\eeq}{\end{eqnarray}}
\newcommand{\ced}{\end{proof}}
\begin{document}

\begin{frontmatter}
%\bibliographystyle{unsrt}
%% Title, authors and addresses

%% use the tnoteref command within \title for footnotes;
%% use the tnotetext command for the associated footnote;
%% use the fnref command within \author or \address for footnotes;
%% use the fntext command for the associated footnote;
%% use the corref command within \author for corresponding author footnotes;
%% use the cortext command for the associated footnote;
%% use the ead command for the email address,
%% and the form \ead[url] for the home page:
%%
%% \title{Title\tnoteref{label1}}
%% \tnotetext[label1]{}
%% \author{Name\corref{cor1}\fnref{label2}}
%% \ead{email address}
%% \ead[url]{home page}
%% \fntext[label2]{}
%% \cortext[cor1]{}
%% \address{Address\fnref{label3}}
%% \fntext[label3]{}

\title{Robust control problems of BSDEs coupled with value functions}
\date{}
\runtitle{}
\author{\fnms{Zhou}
 \snm{YANG}\corref{}\ead[label=e1]{yangzhou@scnu.edu.cn}}
\thankstext{T1}{The work of the first author is supported by NSFC (Grant
No.12171169, 11371155, 11801091), and Guangdong Basic and Applied Basic Research Foundation (Grant No.2019A1515011338).}
\address{School of Mathematical Sciences, South China Normal University
\\\printead{e1}}
\author{\fnms{Jing}
 \snm{ZHANG}\corref{}\ead[label=e2]{zhang\_jing@fudan.edu.cn}}
\thankstext{T2}{The work of the second author is partially supported by National Key R\&D Program of China (2018YFA0703900), NSFC (Grant No.12031009) and Shanghai Science and Technology Commission Grant (21ZR1408600).}
\address{School of Mathematical Sciences, Fudan University
\\\printead{e2}}
\author{\fnms{Chao}
 \snm{ZHOU}\corref{}\ead[label=e3]{matzc@nus.edu.sg}}
\thankstext{T1}{The work of the third author is partially supported by the Ministry of Education in Singapore under the grant MOE AcRF R-146-000-271-112 and by NSFC under the grant award 11871364.}
\address{Department of Mathematics and Risk Management Institute, National University of Singapore
\\\printead{e3}}

\runauthor{Z. Yang, J. Zhang and C. Zhou}

%\footnote{\noindent The work was supported by NNSF of China (Grant
%No.11371155).}}
%
%%% use optional labels to link authors explicitly to addresses:
% \author[label1]{Zhou Yang}
% \address[label1]{School of Mathematical Sciences, South China
% Normal University, Guangzhou 510631, China P.R.}
%
%\author{}
%\address{}

\begin{abstract}
A robust control problem is considered in this paper, where the controlled stochastic differential equations (SDEs) include ambiguity parameters and their coefficients satisfy non-Lipschitz continuous and non-linear growth conditions, the objective function is expressed as a backward stochastic differential equation (BSDE) with the generator depending on the value function. We establish the existence and uniqueness of the value function in a proper space and provide a verification theorem.
%which shows that,  under some proper assumptions,  the solution of the associated Hamilton-Jacobi-Bellman-Isaacs (HJBI) equation is the unique value function.
Moreover, we apply the results to solve two typical optimal investment problems in the market with ambiguity, one of which is with Heston stochastic volatility model. In particular, we establish some sharp estimations for Heston model with ambiguity parameters.
\end{abstract}

%\begin{keyword}
%Robust control problems; Ambiguity; Heston models; Backward stochastic differential equations coupled with value functions; Hamilton-Jacobi-Bellman-Isaacs equations; Verification theorem
%\end{keyword}

\begin{keyword}
\kwd{Robust control problems; Ambiguity; Heston models; Backward stochastic differential equations coupled with value functions; Hamilton-Jacobi-Bellman-Isaacs equations; Verification theorem}
\end{keyword}
\begin{keyword}[class=AMS]
\kwd[Primary ]{60H10; 60H30; 90C39; 91G10; 93E20}
\end{keyword}

%{\bf Mathematics Subject Classification (2010):}  49L20, 49L25, 93E20, 35D40, 60H15
%\textbf{keywords}
%% keywords here, in the form: keyword \sep keyword

%% MSC codes here, in the form: \MSC code \sep code
%% or \MSC[2008] code \sep code (2000 is the default)
\end{frontmatter}

%%
%% Start line numbering here if you want
%%
% \linenumbers
\section{Introduction}
Since the seminal work of Merton \cite{Merton71}, there have been numerous studies of the optimal portfolio problem for an ambiguity averse investor under various assumptions. Most of these works conclude that it is essential to take ambiguity into account when making a decision for portfolio selection. %about the investment portfolio.
Anderson et al. \cite{AndersonHansenSargent} studied the portfolio choice problems involving uncertainty about the return process for equities. In their setting, model uncertainty is reduced to uncertainty about the drift of the state variables. They used a statistical theory of detection to quantify how much model misspecification the decision maker should fear, given his historical data record.
Maenhout \cite{Maenhout} adapted the general robust control framework of \cite{AndersonHansenSargent} to a dynamic portfolio and consumption problem with power utility. He proposed an important modification, namely homothetic robustness, that preserves wealth independence and analytical tractability. He showed that robustness dramatically decreases the demand for equities and is observationally equivalent to recursive preferences when removing wealth effects. As an extension, he also presented a closed-form solution for the portfolio problem for a robust Duffie-Epstein-Zin investor.
Escobar et al. \cite{Escobar} derived, by using the robust control approach, the optimal portfolio for an ambiguity averse investor in complete market. The stock price follows a stochastic volatility jump-diffusion process and the investor can have different levels of uncertainty about the models for the stock and its volatility. They found that the optimal exposures to stock and volatility risks are significantly affected by the ambiguity aversion to the corresponding risk factor only. However, in incomplete market, the  volatility ambiguity has a smaller impact. The similar problems have also been considered in various %related
fields, such as pension funds (see \cite{Baltas2021, Wang2018, Zeng2018}), insurances (see \cite{Yi2013, Zheng2016}) and other related areas (see \cite{NiuYangZhao}). Compared to the previous works, in which the authors intended to find the closed-form solutions and study their properties, we now focus on the mathematical theory as well as the rigorous mathematical proofs.

We also want to mention some works in mathematical finance. Matoussi et al. \cite{Matoussi2015} studied the problem of robust utility maximization in an incomplete market with volatility uncertainty in the framework of second-order backward stochastic differential equations (2BSDEs). They showed that the value function can be written as the initial value of some 2BSDE and proved the existence of an optimal strategy.
%Several examples were also provided to illustrate the links between their problem and the classical utility maximization one.
Pham et al. \cite{Pham2021} considered a dynamic multi-asset mean-variance portfolio selection problem under model uncertainty and proved a general separation principle for the associated robust control problem.
%They proved a general separation principle for the associated robust control problem, provided a justification for under-diversification and also explicitly quantified the degree of under-diversification in terms of correlation bounds and Sharpe ratios proximities.
%The results show that an investor with a poor confidence in the expected return estimation does not hold any risky asset, and on the other hand, trades only one risky asset when the level of ambiguity on correlation matrix is large.
Jin et al. \cite{JinLuoZeng} solved the optimal portfolio in closed-form to a portfolio choice problem in a multi-asset incomplete market with ambiguous jumps. They found that sizable portfolio rebalancing and subsequent wealth losses are encountered in the presence of jump ambiguity.
%They quantified the impact of tail risk on portfolio selection and found that sizable portfolio rebalancing and subsequent wealth losses are encountered in the presence of jump ambiguity.
In all these papers, the authors considered the classical utility functions rather than the recursive type which is given by a BSDE coupled with value function. Moreover, the state SDEs are still assumed to satisfy the Lipschitz conditions, hence the results cannot be applied to some typically non-Lipschitz case in finance.

When investigating the optimal stochastic control problem involving mean-field BSDEs, in order to give the stochastic interpretation to nonlocal Hamilton-Jacobi-Bellman (HJB) equations of mean-field type, Hao and Li \cite{HaoLi1} introduced a new type of BSDEs coupled with value function. They studied the existence and uniqueness as well as the comparison theorem for such kind of BSDEs coupled with value function and proved the associated nonlocal HJB equation has a unique viscosity solution in the space of continuous functions of at most polynomial growth. Then, they continued to study the fully coupled forward backward stochastic differential equations (FBSDEs) case and the reflected mean-field BSDEs case in \cite{HaoLi2} and \cite{LiLi}, respectively. %We want to mention
Notice that, in all these papers, the Lipschitz conditions of the coefficients are supposed to hold true. However, in our work, in order to apply the results to two typical examples in finance (see Section 4), we remove the Lipschitz conditions of the coefficients, this makes the problem more difficult, in particular, due to %obtain
the lack of estimates for Heston model.

In this paper, we focus %concentrate
on the following robust control problem: the controlled state process is governed by, for $i=1,\cdots,n$ and $0\leq t\leq s\leq T$,
\begin{equation*}
   X^{t,x;\mathbf{u},\mathbf{v}}_{i,\,s}
   =x_i+\int^s_t \alpha_{i}(r,X^{t,x;\mathbf{u},\mathbf{v}}_r,\mathbf{u}_r,\mathbf{v}_r)\,dr
   +\sum\limits_{j=1}^m\int^s_t\beta_{i\,j}
   (r,X^{t,x;\mathbf{u},\mathbf{v}}_r,\mathbf{u}_r,\mathbf{v}_r)\,dB^j_r,
\end{equation*}
the payoff function is of recursive type given by
\begin{eqnarray*}
   &&\hspace{-0.8cm}Y^{t,x;\mathbf{u},\mathbf{v}}_s
   =\int_s^{T\wedge\tau} f\left(r,X_r^{t,x;\mathbf{u},\mathbf{v}},Y^{t,x;\mathbf{u},\mathbf{v}}_r,
   Z^{t,x;\mathbf{u},\mathbf{v}}_r,
   W\left(r,X_r^{t,x;\mathbf{u},\mathbf{v}}\right),\mathbf{u}_r,\mathbf{v}_r\right)dr
   \nonumber\\
   &&\quad\quad-\sum_{j=1}^m\int_s^{T\wedge\tau} Z^{t,x;\mathbf{u},\mathbf{v}}_{j,r}\,dB^j_r
   +\Psi\left(T\wedge\tau,X_{T\wedge\tau}^{t,x; \mathbf{u}, \mathbf{v}}\right),
   \quad 0\leq t\leq s\leq T\wedge\tau,
\end{eqnarray*}
where $\tau$ is the first escaping time and the value function is defined as follows
\begin{equation*}
    W(t,x)=\sup\limits_{\mathbf{u}\in{\cal U}(t,x)}\inf\limits_{\mathbf{v}\in{\cal V}(t,x)}
    Y^{t,x;\mathbf{u},\mathbf{v}}_t.
\end{equation*}
We prove that the above stochastic control problem admits a unique value function in some proper space under some mild conditions by the contraction mapping method and the bootstrap method.
%We also establish the verification theorem %which states that the classical solution to HJBI equation is the unique value function to our robust control problem.
%which presents the relation between our robust control problem and its related HJBI equation.
As the generator of the BSDEs contains the value function, we first fix the value function in the generator, then our control problem becomes as a standard stochastic control problem. By using the localization method, together with dynamic programming principle (DPP) for the standard control problem and the relation between the standard control problem and our original problem, we prove the verification theorem, which states that the classical solution to HJBI equation is the unique value function to our robust control problem.
Our results can be applied to solve some typical optimal investment problems in the market with ambiguity, one of which is with Heston stochastic volatility model. In particular, some sharp estimations for Heston model with ambiguity parameters are established.
\vspace{1mm}

The main difficulties of our work come from the following four aspects: first, the objective functional is given by a BSDE coupling with the value function; second, the lack of estimates for Heston model in a market with ambiguity, in fact, to the best of our knowledge, even in a  market with non-ambiguity, there are few results on the estimates of Heston model;  third, the controlled SDEs are no longer assumed to satisfy the Lipschitz condition, so that the results can be applied to various cases, for example, to the Heston model; %CIR model;
last, a proper space of the value functions need to be given.
\vspace{1mm}

One of the key points in our paper is the local estimate for the Heston model with ambiguity parameters (Conclusion (1) in Theorem A1), which may be   powerful for some stochastic control problems of Heston model. For example, in \cite{Kraft}, Kraft et al. assumed that the speed of mean reversion in volatility divided by the bound of the admissible investment strategy is large enough in their optimal investment consumption problem under Heston model. Futhermore, Pu and Zhang \cite{PuZhang} required that the admissible ambiguity parameters must take some special forms in the robust optimal investment consumption problem under Heston model with ambiguity. However, these conditions are completely weakened in our paper by means of the local estimate, the bootstrap method and the DPP, where the model is more generalized than those in \cite{Kraft} and \cite{PuZhang}.
\vspace{1mm}

The rest of this paper is organized as follows. In the next section, we present our stochastic control problem and set the notations and assumptions. Section 3 is devoted to establishing the existence and uniqueness of the value function as well as a verification theorem which shows the relationship between the stochastic control problem (SCP) and its associate HJBI equation. Then in Section 4, we discuss two applications in finance of our main results. Finally, we give the estimates for Heston model and the proof of Theorem \ref{result2 of example 2} in Appendix A and Appendix B, respectively.

\section{Preliminaries}

\subsection{The state equation}

The state $X=(X_1,\cdot\cdot\cdot,X_n)^{\rm T},\,n\in \mathbb{Z}^+$ is given by the following controlled stochastic differential equations, for $i=1,\cdot\cdot\cdot,n$ and $0\leq t\leq s\leq T$,
\begin{equation}\label{stateequation1}
   X^{t,x;\mathbf{u},\mathbf{v}}_{i,\,s}
   =x_i+\int^s_t \alpha_{i}(r,X^{t,x;\mathbf{u},\mathbf{v}}_r,\mathbf{u}_r,\mathbf{v}_r)\,dr
   +\sum\limits_{j=1}^m\int^s_t\beta_{i\,j}
   (r,X^{t,x;\mathbf{u},\mathbf{v}}_r,\mathbf{u}_r,\mathbf{v}_r)\,dB^j_r.
\end{equation}
Here  $m\in\mathbb{Z}^+$ and $T\in(0,+\infty)$ is fixed. The initial value $x\in {\cal X}$, where ${\cal X}$ is an open domain in $\mathbb{R}^n$ with continuous boundary $\partial {\cal X}$. $\mathbf{u}$ and $\mathbf{v}$ are control processes. $B$ is an $m$-dimensional standard Brownian motion on the probability space $(\Omega,{\cal F},\mathbb{F},\mathbb{P})$ satisfying the usual conditions.
Without loss of generality, suppose that $\mathbb{F}$ is the completed natural filtration generated by the Brownian motion $B$.

Let ${\cal N}$ denote the class of $\mathbb{P}-$null sets in ${\cal F}$ and $\sigma_1\vee\sigma_2$ denote the $\sigma-$field generated by $\sigma_1\cup\sigma_2$. For given $t\in[0,T)$, the augmented filtration $\{{\cal F}^t_s\}_{t\leq s\leq T}$ is well-defined with ${\cal F}^t_s\triangleq \sigma(\{B_r-B_t\}_{t\leq r\leq s})\vee{\cal N}$. Particularly, we write ${\cal F}_s={\cal F}^0_s$.

For given $t\in[0,T)$, denote by ${\cal U}[t,T]$ and ${\cal V}[t,T]$ the sets of progressively measurable processes with respect to $\{{\cal F}^t_s\}_{t\leq s\leq T}$ which take values respectively in sets ${\cal U}\subset\mathbb{R}^N$ and ${\cal V}\subset\mathbb{R}^M$ with fixed positive integers $N,M$. The coefficients $\alpha_i,\beta_{ij}:[0,T]\times \mathbb{R}^n\times{\cal U}\times{\cal V}\rightarrow\mathbb{R}\,(i=1,\cdots,n,\,j=1,\cdots,m)$ are Borel-measurable functions.

\subsection{Notations}

We use the following notations throughout this paper:
\vspace{2mm}

$\bullet\;\;{\cal X}_t^s\triangleq[t,s]\times {\cal X},\;\;{\cal X}^s\triangleq[0,s)\times {\cal X}$, for $0\leq t\leq s\leq T$;

$\bullet\;\;\partial{\cal X}^T\triangleq\{T\}\times\overline{{\cal X}}\cup[0,T]\times\partial {\cal X}$, the parabolic boundary of ${\cal X}^T$;

$\bullet\;\; C(\overline{{\cal X}^T})$, the set of all continuous functions on $\overline{{\cal X}^T}$;

$\bullet\;\; C^{1,2}({\cal X}^T)$, the set of all functions whose first derivative with respect to $t$ and second derivative with respect to $x$ are continuous on ${\cal X}^T$;

$\bullet\;\;{\cal B}_{(r,s)}$, the set of all Borel-measurable functions with norm
$$
 \|w\|_{{\cal B}_{(r,s)}}\triangleq
 {\rm ess}\!\sup\left\{{|w(t,x)|\over b_1(t,x)}:(t,x)\in {\cal X}_r^s\right\},
$$
where $b_1$ is some poisitive Borel-measurable function.
\vspace{2mm}

For $q\geq1$, we define
\vspace{2mm}

$\bullet\;\;L^q_{s}$, the set of all ${\cal F}_s-$measurable random variables with norm
$
 \|\xi\|_{L^q_{s}}\triangleq [\mathbb{E}(|\xi|^q)]^{1/q};
$

$\bullet\;\;\mathbb{L}^q(t,s)$, the set of all $\{{\cal F}^t_s\}_{t\leq s\leq T}$-progressively measurable processes with norm
$$
 \|X\|_{\mathbb{L}^q(t,s)}\triangleq \bigg[\,\mathbb{E}\Big(\int_t^s\,|\,X_r\,|^q\,dr\Big)\, \bigg]^{1\over q};
$$

$\bullet\;\;\mathbb{S}^q(t,s)$, the set of all continuous processes in $\mathbb{L}^q(t,s)$ with norm
$$
 \|X\|_{\mathbb{S}^q(t,s)}\triangleq
 \bigg[\,\mathbb{E}
 \Big(\sup\limits_{r\in[\,t,\,s\,]}|\,X_r\,|^q\Big)\,\bigg]^{1\over q}.
$$

\subsection{The payoff function}

We consider the payoff function introduced by the following BSDE:
\begin{eqnarray}\label{payoff1}
   &&\hspace{-0.8cm}Y^{t,x;\mathbf{u},\mathbf{v}}_s
   =\int_s^{T\wedge\tau} f\left(r,X_r^{t,x;\mathbf{u},\mathbf{v}},Y^{t,x;\mathbf{u},\mathbf{v}}_r,
   Z^{t,x;\mathbf{u},\mathbf{v}}_r,
   W\left(r,X_r^{t,x;\mathbf{u},\mathbf{v}}\right),\mathbf{u}_r,\mathbf{v}_r\right)dr
   \nonumber\\
   &&\quad\quad-\sum_{j=1}^m\int_s^{T\wedge\tau} Z^{t,x;\mathbf{u},\mathbf{v}}_{j,r}\,dB^j_r
   +\Psi\left(T\wedge\tau,X_{T\wedge\tau}^{t,x; \mathbf{u}, \mathbf{v}}\right),
   \quad 0\leq t\leq s\leq T\wedge\tau,
\end{eqnarray}
where $\tau\triangleq\inf\{s\geq t:X_s^{t,x;\mathbf{u},\mathbf{v}}\notin{\cal X}\}$ and
\begin{equation}\label{valuefunction1}
    W(t,x)=\sup\limits_{\mathbf{u}\in{\cal U}(t,x)}\inf\limits_{\mathbf{v}\in{\cal V}(t,x)}
    Y^{t,x;\mathbf{u},\mathbf{v}}_t.
\end{equation}
Here ${\cal U}(t,x)$ and ${\cal V}(t,x)$ are the admissible sets which will be defined later. In BSDE~\eqref{payoff1}, the generator $f:\overline{{\cal X}^T}\times\mathbb{R}\times\mathbb{R}^m\times\mathbb{R}\times {\cal U}\times{\cal V}\rightarrow\mathbb{R}$ and the terminal condition $\Psi:\overline{{\cal X}^T}\rightarrow\mathbb{R}$ are Borel-measurable functions that satisfy the following assumptions:
\vspace{2mm}

\textbf{Assumption B1.} $f$ is Lipschitz continuous with respect to $(y,z,w)$, i.e., there exist a constant $K_1>0$ and a non-negative Borel-measurable function $b_2:[0,T]\times\mathcal{U}\times\mathcal{V}\rightarrow\mathbb{R}$ such that
$$
   |f(t,x,y,z,w,u,v)-f(t,x,y^\prime,z^\prime,w^\prime,u,v)|\leq
   K_1(|y-y^\prime|+|z-z^\prime|)+b_2(t,u,v)|w-w^\prime|,
$$
for any $(t,x,u,v)\in\overline{{\cal X}^T}\times{\cal U}\times{\cal V},\ y,y^\prime\in \mathbb{R},\ z,z^\prime\in\mathbb{R}^m\ \mbox{and}\ w,w^\prime\in\mathbb{R}$.
\vspace{2mm}

\textbf{Assumption B2. }There exist a positive Borel-measurable function $b_1:\overline{{\cal X}^T}\rightarrow\mathbb{R}$ and a non-negative Borel-measurable function $b_3:[0,T]\times\mathcal{U}\times\mathcal{V}\rightarrow\mathbb{R}$ such that
$$
   |f(t,x,0,0,0,u,v)|\leq b_1(t,x)b_2(t,u,v)+b_3(t,u,v),\;\forall\;(t,x,u,v)\in\overline{{\cal X}^T}
   \times{\cal U}\times{\cal V}.
$$

\textbf{Assumption B3.} $\Psi\in C(\,\overline{{\cal X}^T}\,)$ and there exists a non-negative Borel-measurable function $b_4:\overline{\mathcal{X}}\rightarrow\mathbb{R}$ such that
$$
   |\Psi(t,x)|\leq b_4(x),\;\;\forall\;(t,x)\in\overline{{\cal X}^T}.
$$

\begin{remark}
The difference between SCP \eqref{valuefunction1} and the standard SCP is that the payoff function depends on the value function $W$.
\end{remark}\smallskip

\subsection{The admissible sets}
We denote by ${\cal U}(t,x)$ and ${\cal V}(t,x)$ the admissible sets of controls $\mathbf{u}$ and $\mathbf{v}$, which are nonempty subsets of ${\cal U}[t,T]$ and ${\cal V}[t,T]$, respectively and make the following assumptions:
\vspace{0.2mm}
%The admissible sets of controls $\mathbf{u}$ and $\mathbf{v}$ are denoted by ${\cal U}(t,x)$ and ${\cal V}(t,x)$, respectively. They are nonempty subsets of ${\cal U}[t,T]$ and ${\cal V}[t,T]$ satisfying the following Assumption A1, A2(N) and A3.

\textbf{Assumption A1. (``Switching condition'')} For any $(t,x)\in {\cal X}^T,\,\mathbf{u}\in {\cal U}(t,x)$ and $\mathbf{v}\in {\cal V}(t,x)$, SDE \eqref{stateequation1} has a unique strong solution $X^{t,x;\mathbf{u},\mathbf{v}}$.
Moreover, for any $\{{\cal F}^t_s\}_{t\leq s\leq T}-$stopping time ${\underline{\tau}}$ satisfying $t\leq {\underline{\tau}}\leq\inf(s\geq t:X_s^{t,x;\mathbf{u},\mathbf{v}}\notin{\cal X})\wedge T$, let $\widetilde{\mathbf{u}}\in {\cal U}({\underline{\tau}},X_{\underline{\tau}}^{t,x;\mathbf{u},\mathbf{v}}),
\widetilde{\mathbf{v}}\in{\cal V}({\underline{\tau}},X_{\underline{\tau}}^{t,x;\mathbf{u},\mathbf{v}})$, define new controls $\widehat{\mathbf{u}}$ and $\widehat{\mathbf{v}}$ by
$$
   \widehat{\mathbf{u}}_s=
   \left\{
   \begin{array}{ll}
   \mathbf{u}_s,&t\leq s\leq{\underline{\tau}},
   \vspace{2mm}\\
   \widetilde{\mathbf{u}}_s,&{\underline{\tau}}\leq s\leq\widetilde{\tau},
   \end{array}
   \right.
 \hspace{1cm}  \widehat{\mathbf{v}}_s=
   \left\{
   \begin{array}{ll}
   \mathbf{v}_s,&t\leq s\leq{\underline{\tau}},
   \vspace{2mm}\\
   \widetilde{\mathbf{v}}_s,&{\underline{\tau}}\leq s\leq\widetilde{\tau},
   \end{array}
   \right.
$$
where
$$ \widetilde{\tau}=\inf\left(s\geq {\underline{\tau}}:X_s^{{\underline{\tau}},
   X_{\underline{\tau}}^{t,x;\mathbf{u},\mathbf{v}};
   \widetilde{\mathbf{u}},\widetilde{\mathbf{v}}}
   \notin{\cal X}\right)\wedge T.$$
Then, for any $\{{\cal F}^t_s\}_{t\leq s\leq T}-$stopping time $\overline{\tau}$ satisfying $t\leq\overline{\tau}\leq \widetilde{\tau}$, one has $\widehat{\mathbf{u}}|_{(\overline{\tau},\widetilde{\tau})}
\in {\cal U}
({\overline{\tau}},X_{\overline{\tau}}^{t,x;\widehat{\mathbf{u}},\widehat{\mathbf{v}}})$ and
$\widehat{\mathbf{v}}|_{(\overline{\tau},\widetilde{\tau})}
\in {\cal V}
({\overline{\tau}},X_{\overline{\tau}}^{t,x;\widehat{\mathbf{u}},\widehat{\mathbf{v}}})$,  where $\widehat{\mathbf{u}}|_{(\overline{\tau},\widetilde{\tau})}$ and $\widehat{\mathbf{v}}|_{(\overline{\tau},\widetilde{\tau})}$ denote respectively the restrictions of $\widehat{\mathbf{u}}$ and $\widehat{\mathbf{v}}$ to $[\overline{\tau},\widetilde{\tau}]$.
\vspace{2mm}

\textbf{Assumption A2(N). }There exist a positive constant $K_2$ and a positive integer $N$ such that, for any $(t,x)\in{\cal X}_{(i-1)T/N}^{iT/N},\ i=1,2,\cdots,N,\,\mathbf{u}\in {\cal U}(t,x)$ and $\mathbf{v}\in {\cal V}(t,x)$, the strong solution $X^{t,x;\mathbf{u},\mathbf{v}}$ to SDE \eqref{stateequation1} satisfies
$$
   \left\|b_1\left(\cdot,X^{t,x;\mathbf{u},\mathbf{v}}\right)b_2(\cdot,\mathbf{u},\mathbf{v})
   +b_3(\cdot,\mathbf{u},\mathbf{v})\right\|_{\mathbb{L}^q(t,iT/N)}
   +\left\|b_4\left(X^{t,x;\mathbf{u},\mathbf{v}}\right)\right\|_{\mathbb{S}^q(t,iT/N)}
   \leq K_2b_1(t,x).
$$

%\textbf{Assumption A3. } The following relation holds
%$$
% \sup\limits_{\mathbf{u}\in {\cal U}(t,x),\,\mathbf{v}\in {\cal V}(t,x),\,
% (t,x)\in{\cal X}^T}
% {\left\|b_1\left(\cdot,X^{t,x;\mathbf{u},\mathbf{v}}\right)b_2(\cdot,\mathbf{u},\mathbf{v})
% \right\|_{{\red \mathbb{L}^q(t,t+\epsilon)}}\over b_1(t,x)}
% \leq{e^{-(K_1+1)^2T/2}\over 60}.
%$$

\textbf{Assumption A3(N). }For any $i=1,2,\cdots,N$, there exists a $\epsilon>0$ such that
$$
 \sup\limits_{\mathbf{u}\in {\cal U}(t,x),\mathbf{v}\in {\cal V}(t,x),(t,x)\in{\cal X}_{(i-1)T/N}^{iT/N}}\!\!\!\!
 {\left\|b_1\left(\cdot,X^{t,x;\mathbf{u},\mathbf{v}}\right)b_2(\cdot,\mathbf{u},\mathbf{v})
 \right\|_{\mathbb{L}^q(t,(t+\epsilon)\wedge(iT/N))}\over b_1(t,x)}
 \leq{e^{-(K_1+1)^2T/(2N)}\over 60}.
$$
\begin{remark}
Assumption A1 is important for the standard SCP to satisfy the DPP. In Assumption A2(1) and A3(1) (i.e., $N=1$), the inequalities hold on the whole domain ${\cal X}^T$, whereas, in Assumption A2(N) and A3(N) with $N\geq 2$, the inequalities hold only on the local domain ${\cal X}_{(i-1)T/N}^{iT/N}$, $i=1,2,\cdots,N$.
\end{remark}

\section{The main results}
\subsection{The existence and uniqueness of value function}
In this subsection, we give a result on the existence and uniqueness of the value function for
SCP \eqref{valuefunction1} in some proper space.

\begin{theorem}\label{maintheorem1}
Suppose that Assumption B1, B2, B3 and A1, A2(1) and A3(1) with $q=2$ hold. Then SCP \eqref{valuefunction1} has a unique value function $W$ in the space ${\cal B}_{(0,T)}$.
\end{theorem}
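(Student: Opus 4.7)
\noindent My plan is to realise $W$ as the unique fixed point in ${\cal B}_{(0,T)}$ of the map
$$
   \Gamma(w)(t,x):=\sup_{\mathbf{u}\in{\cal U}(t,x)}\inf_{\mathbf{v}\in{\cal V}(t,x)} Y^{t,x;\mathbf{u},\mathbf{v},w}_t,
$$
where $Y^{t,x;\mathbf{u},\mathbf{v},w}$ is the unique solution of BSDE~\eqref{payoff1} obtained after \emph{freezing} the coupling term $W(r,X_r)$ in the driver as $w(r,X_r)$. For any fixed $w\in{\cal B}_{(0,T)}$, B1 gives Lipschitz $(y,z)$-dependence of the driver while B2, B3 and A2(1) supply the square integrability of the data, so classical Pardoux--Peng theory produces a unique solution and $\Gamma$ is well defined. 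A fixed point of $\Gamma$ in ${\cal B}_{(0,T)}$ is precisely a value function of SCP~\eqref{valuefunction1}, so existence and uniqueness of $W$ will follow from existence and uniqueness of that fixed point.

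\textbf{Self-map property.} Applying the standard a priori BSDE estimate together with $|w(r,X_r)|\leq\|w\|_{{\cal B}}\,b_1(r,X_r)$, B2 and B3 produces a bound of the form
$$
   |Y^{t,x;\mathbf{u},\mathbf{v},w}_t|^2\leq C\,\mathbb{E}\!\left[b_4(X_{T\wedge\tau})^2+\int_t^{T\wedge\tau}(b_1b_2+b_3)^2\,dr+\|w\|_{{\cal B}}^2\!\int_t^{T\wedge\tau}b_2^2b_1^2\,dr\right],
$$
and A2(1) with $q=2$ dominates each right-hand term by a constant multiple of $b_1(t,x)^2$. After taking $\sup\inf$ over controls this yields $|\Gamma(w)(t,x)|\leq C(1+\|w\|_{{\cal B}})\,b_1(t,x)$, so $\Gamma$ does map ${\cal B}_{(0,T)}$ into itself.

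\textbf{Contraction on the terminal slice $[T-\epsilon,T]$.} For $w_1,w_2\in{\cal B}_{(0,T)}$ and the same controls, the difference $\Delta Y:=Y^{w_1}-Y^{w_2}$ solves a linear BSDE whose driver is bounded by $K_1(|\Delta Y|+|\Delta Z|)+b_2|w_1-w_2|$. It\^o's formula applied to $e^{(K_1+1)^2 s}|\Delta Y_s|^2$, with $2K_1|\Delta Y||\Delta Z|$ split as $K_1^2|\Delta Y|^2+|\Delta Z|^2$ so that the $|\Delta Z|^2$ piece cancels and the $|\Delta Y|^2$ pieces add up to exactly $(K_1+1)^2|\Delta Y|^2$, yields the clean estimate
$$
   e^{(K_1+1)^2 t}|\Delta Y_t|^2\leq \mathbb{E}\!\left[\int_t^{T\wedge\tau}e^{(K_1+1)^2 s}\,b_2^2\,|w_1-w_2|^2(r,X_r)\,dr\right].
$$
When $t\geq T-\epsilon$ the integration interval is contained in $[t,(t+\epsilon)\wedge T]$, so substituting $|w_1-w_2|\leq\|w_1-w_2\|_{{\cal B}}\,b_1$ and invoking A3(1) with $q=2$ collapses the right-hand side to $(1/3600)\,b_1(t,x)^2\,\|w_1-w_2\|_{{\cal B}}^2$. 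The weight $e^{-(K_1+1)^2 T/2}$ and the numerical factor $60$ in A3(1) are calibrated precisely to absorb the $e^{(K_1+1)^2(T-t)}$ coming from the BSDE estimate, leaving the contraction ratio $1/60$. Banach's theorem on ${\cal B}_{(T-\epsilon,T)}$ then produces a unique fixed point $W$ on $[T-\epsilon,T]$.

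\textbf{Backward bootstrap and main obstacle.} I then extend to the full interval by iterating on slices $[T-(k+1)\epsilon,T-k\epsilon]$. On each slice the flow property of BSDE~\eqref{payoff1} allows me to replace the portion of the solution living after $T-k\epsilon$ by the already-constructed $W$ evaluated at the state at that time, while Assumption A1 guarantees that admissible controls concatenate across interfaces; the reduced BSDE-with-frozen-coupling problem on the new slice admits the same contraction estimate. After $\lceil T/\epsilon\rceil$ steps one reaches $t=0$ and obtains a unique $W\in{\cal B}_{(0,T)}$. I expect the main technical obstacle to lie precisely in this bootstrap step: one must verify rigorously that the locally constructed pieces paste into a single function which still represents the $\sup\inf$ over \emph{globally} admissible controls, not merely over slice-wise controls. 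The required ingredients are a dynamic programming principle for the inner standard control problem (available because $w$ is frozen), Assumption A1 to concatenate controls across slices, and a careful handling of the random exit time $\tau$ which may cut the trajectory inside any slice. Everything else --- the a priori BSDE estimates and the verification of the self-map and contraction properties --- is routine once A2(1) and A3(1) are in place.
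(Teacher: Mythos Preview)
Your proposal is correct and follows essentially the same route as the paper: freeze the coupling to obtain a standard control problem, show the resulting map is a contraction on a terminal slice $[T-\epsilon,T]$ via the BSDE stability estimate combined with A3(1), and then bootstrap backwards using the DPP for the frozen problem together with the switching condition A1. The paper obtains the contraction constant $1/2$ via the $\mathbb{S}^2$-estimate with prefactor $30$, while your direct It\^o computation yields the sharper constant $1/60$, but this is immaterial. Your identification of the pasting step as the main technical point is accurate; the paper handles it exactly as you anticipate, by hardcoding the already-constructed $W$ into the driver on $[\widehat{T},T]$ and invoking the DPP of the \emph{standard} SCP \eqref{valuefunction2} (with $J$ fixed) to show that any global value function must restrict to the unique piece on each slice.
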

\begin{remark}\label{remark of main result1}
If the generator $f$ is independent of $y$ and $z$, i.e.,
$$
   Y^{t,x;\mathbf{u},\mathbf{v}}_t
   =\mathbb{E}\left[\Psi\left(T\wedge\tau,X_{T\wedge\tau}^{t,x; \mathbf{u}, \mathbf{v}}\right)
   +\int_t^{T\wedge\tau} f\left(r,X_r^{t,x;\mathbf{u},\mathbf{v}},
   W\left(r,X_r^{t,x;\mathbf{u},\mathbf{v}}\right),\mathbf{u}_r,\mathbf{v}_r\right)\,dr\right].
$$
Then the conclusion in Theorem \ref{maintheorem1} still holds if Assumption B1, B2, B3 and A1, A2(1) and A3(1) with $q=1$ are satisfied. This can be deduced similarly as the following proof.
\end{remark}

\begin{proof}
Choose $\epsilon>0$ small enough which will be defined later. We first consider SCP \eqref{valuefunction1} on $[T-\epsilon,T]$. Fix a function $J\in{\cal B}_{(T-\epsilon,T)}$, denote
\begin{equation}\label{newfunction1}
   f^J(r,x,y,z,u,v)\triangleq f(r,x,y,z,J(r,x),u,v)
\end{equation}
and construct an auxiliary standard SCP:
\begin{equation}\label{valuefunction2}
    W^J(t,x)=\sup\limits_{\mathbf{u}\in{\cal U}(t,x)}\inf\limits_{\mathbf{v}\in{\cal V}(t,x)}
    Y^{J;t,x;\mathbf{u},\mathbf{v}}_t,\quad\forall\;(t,x)\in {\cal X}_{T-\epsilon}^{T}\,,
\end{equation}
where $Y^{J;t,x;\mathbf{u},\mathbf{v}}$ is determined by the following BSDE:
\begin{eqnarray}\label{payoff2}
   Y^{J;t,x;\mathbf{u},\mathbf{v}}_s
   &=&\Psi\left(T\wedge\tau,X_{T\wedge\tau}^{t,x;\mathbf{u},\mathbf{v}}\right)
   +\int_s^{T\wedge\tau} f^J\left(r,X_r^{t,x;\mathbf{u},\mathbf{v}},Y^{J;t,x;\mathbf{u},\mathbf{v}}_r,
   Z^{J;t,x;\mathbf{u},\mathbf{v}}_r,\mathbf{u}_r,\mathbf{v}_r\right)\,dr
   \nonumber\\
   &&-\sum_{j=1}^m\int_s^{T\wedge\tau} Z^{J;t,x;\mathbf{u},\mathbf{v}}_{j,\,r}\,dB^j_r,\qquad
   0\leq t\leq s\leq T\wedge\tau,
\end{eqnarray}
and $X^{t,x;\mathbf{u},\mathbf{v}}$ is subject to~\eqref{stateequation1}.

From Assumption B1, we know that the generator $f^J$ of BSDE~\eqref{payoff2} is Lipschitz continuous with respect to $(y,z)$. Combining with Assumption B1, B2, B3 and A2(1), we deduce
\begin{eqnarray*}%\label{estimate1inthoerem1}
   &&\left\|f^J(\cdot,X^{t,x;\mathbf{u},\mathbf{v}},0,0,\mathbf{u},
   \mathbf{v})\right\|_{\mathbb{L}^2(t,T)}
   \nonumber\\[2mm]
   &\leq& \left(1+\|J\|_{{\cal B}_{(T-\epsilon,T)}}\right)
   \left\|b_1\left(\cdot,X^{t,x;\mathbf{u},\mathbf{v}}\right)
   b_2(\cdot,\mathbf{u},\mathbf{v})
   +b_3(\cdot,\mathbf{u},\mathbf{v})\right\|_{\mathbb{L}^2(t,T)}
   \nonumber\\[2mm]
   &\leq& K_2\left(1+\|J\|_{{\cal B}_{(T-\epsilon,T)}}\right)b_1(t,x)
\end{eqnarray*}
and
\begin{equation*}%\label{estimate2inthoerem1}
   \left\|\Psi\left(T\wedge\tau,
   X_{T\wedge\tau}^{t,x;\mathbf{u},\mathbf{v}}\right)\right\|_{L^2_{T\wedge\tau}}
   \leq \left\|b_4\left(X_{T\wedge\tau}^{t,x;\mathbf{u},\mathbf{v}}\right)\right\|_{L^2_{T\wedge\tau}}
   \leq \left\|b_4\left(X^{t,x;\mathbf{u},\mathbf{v}}\right)\right\|_{\mathbb{S}^2(t,T)}
   \leq K_2b_1(t,x).
\end{equation*}
So, BSDE~\eqref{payoff2} admits a unique solution which satisfies the following estimate:
\begin{equation*}%\label{estimate3inthoerem1}
   \left\|Y^{J;t,x;\mathbf{u},\mathbf{v}}\right\|_{\mathbb{S}^2(t,T)}
   +\left\|Z^{J;t,x;\mathbf{u},\mathbf{v}}\right\|_{\mathbb{L}^2(t,T)}
   \leq Cb_1(t,x),
\end{equation*}
where the constant $C$ depends only on $T,K_1,K_2$ and $\|J\|_{{\cal B}_{(T-\epsilon,T)}}$. %but independent of $t,x,\mathbf{u},\mathbf{v}$.
Thus, $W^J$ is well-defined and $\forall (t,x)\in\mathcal{X}^T_{T-\epsilon}$, $|W^J(t,x)|\leq Cb_1(t,x)$, which implies $W^J\in{\cal B}_{(T-\epsilon,T)}$.

Then, we define ${\cal T}: J\mapsto W^J$ and prove that ${\cal T}$ is a contraction mapping. For any $J_1,J_2\in{\cal B}_{(T-\epsilon,T)}$ and $\mathbf{u}\in{\cal U}(t,x),\mathbf{v}\in{\cal V}(t,x)$, denote
$$
   \Delta Y\triangleq Y^{J_1;t,x;\mathbf{u},\mathbf{v}}-Y^{J_2;t,x;\mathbf{u},\mathbf{v}},\qquad \Delta Z \triangleq Z^{J_1;t,x;\mathbf{u},\mathbf{v}}-Z^{J_2;t,x;\mathbf{u},\mathbf{v}},
$$
and
\begin{eqnarray*}
    \Delta f_r&\triangleq &f^{J_1}\left(r,X_r^{t,x;\mathbf{u},\mathbf{v}},
    Y^{J_1;t,x;\mathbf{u},\mathbf{v}}_r,
    Z^{J_1;t,x;\mathbf{u},\mathbf{v}}_r,\mathbf{u}_r,\mathbf{v}_r\right)\\&&
    -f^{J_2}\left(r,X_r^{t,x;\mathbf{u},\mathbf{v}},
    Y^{J_2;t,x;\mathbf{u},\mathbf{v}}_r,
    Z^{J_2;t,x;\mathbf{u},\mathbf{v}}_r,\mathbf{u}_r,\mathbf{v}_r\right).
\end{eqnarray*}
Then $(\Delta Y,\Delta Z)$ satisfies the following BSDE:
\begin{equation*}%\label{bsde1intheorem1}
   \Delta Y_s=\int_{s}^{T\wedge\tau} \Delta f_r\,dr
   -\sum\limits_{j=1}^m\int_{s}^{T\wedge\tau}\Delta Z_{j,\,r}\,dW^j_r,\quad
   \forall\;t\leq s\leq T\wedge\tau.
\end{equation*}

By Assumption B1, we have
\begin{eqnarray*}
   |\Delta f_r|&\leq&
   K_1\left(|\Delta Y_r|+|\Delta Z_r|\right)
   +b_2(r,\mathbf{u}_r,\mathbf{v}_r)
   \left|(J_1-J_2)\left(r,X_r^{t,x;\mathbf{u},\mathbf{v}}\right)\right|
   \\[2mm]
   &\leq& K_1\left(|\Delta Y_r|+|\Delta Z_r|\right)
   +b_2(r,\mathbf{u}_r,\mathbf{v}_r)
   \|J_1-J_2\|_{{\cal B}_{(T-\epsilon,T)}}
   b_1\left(r,X_r^{t,x;\mathbf{u},\mathbf{v}}\right).
\end{eqnarray*}

Applying the standard estimations for BSDEs, it is not difficult to derive that, for any $\mathbf{u}\in {\cal U}(t,x),\ \mathbf{v}\in {\cal V}(t,x)$ and $(t,x)\in {\cal X}_{T-\epsilon}^{T}$,
$$
   \|\Delta Y\|_{\mathbb{S}^2(t,T)}
   \leq 30e^{(K_1+1)^2T/2}\|J_1-J_2\|_{{\cal B}_{(T-\epsilon,T)}}
   \left\|b_1\left(\cdot,X^{t,x;\mathbf{u},\mathbf{v}}\right)
   b_2(\cdot,\mathbf{u},\mathbf{v})\right\|_{\mathbb{L}^2(t,T)}.
$$
% Due to Assumption {\blue A3(1)}, we can choose a small enough $\epsilon>0$ such that
Then from Assumption A3(1), we know that
$$
   30e^{(K_1+1)^2T/2}\left\|b_1\left(\cdot,X^{t,x;\mathbf{u},\mathbf{v}}\right)
   b_2(\cdot,\mathbf{u},\mathbf{v})\right\|_{\mathbb{L}^2(t,T)}
   \leq {1\over2}b_1(t,x),
$$ for any $\mathbf{u}\in {\cal U}(t,x),\ \mathbf{v}\in {\cal V}(t,x)$ and $(t,x)\in {\cal X}_{T-\epsilon}^{T}$.
Therefore, we obtain
\begin{eqnarray*}%\label{estimate4inthoerem1}
   &&\|{\cal T}(J_1)-{\cal T}(J_2)\|_{{\cal B}_{(T-\epsilon,T)}}
   =\left\|W^{J_1}-W^{J_2}\right\|_{{\cal B}_{(T-\epsilon,T)}}
   \nonumber\\[2mm]
   &\leq&\sup\limits_{\mathbf{u}\in {\cal U}(t,x),\,\mathbf{v}\in{\cal V}(t,x),
   \,(t,x)\in {\cal X}_{T-\epsilon}^{T}}{|\Delta Y_t|\over b_1(t,x)}
   \leq {1\over2}\|J_1-J_2\|_{{\cal B}_{(T-\epsilon,T)}}.
\end{eqnarray*}
Until now, we have showed that ${\cal T}$ is a contraction mapping, hence, it has a unique fixed point, which implies that there exists a unique value function $W$ of SCP \eqref{valuefunction1} on $[T-\epsilon,T]$.
%By the contraction mapping principle, we deduce that the mapping  ${\cal T}$ has a unique fixed point.
%It means that there exists a unique value function $W$ of optimal control problem \eqref{valuefunction1} if $t\in[\,T-\epsilon,T\,]$.

\vspace{2mm}
Next, we extend the result to the whole interval $[0,T]$. Suppose that we have proved there exists a unique value function $W$ of the SCP on ${\cal X}_{\widehat{T}}^{T}$. Fix a function $J\in{\cal B}_{(\widehat{T}-\epsilon,\widehat{T})}$, where $\epsilon$ is chosen as the above. Without loss of generality, we assume $\epsilon\leq\widehat{T}$. Denote
$$
   f^J(r,x,y,z,u,v)\triangleq
   f\left(r,x,y,z,J(r,x)I_{\{\,\widehat{T}-\epsilon\leq r<\widehat{T}\,\}}
   +W(r,x)I_{\{\,\widehat{T}\leq r\leq T\}},u,v\right)
$$
and consider the standard SCP \eqref{valuefunction2} on ${\cal X}_{\widehat{T}-\epsilon}^{T}$.

Define ${\cal T}(J)$ be the restriction of the value function $W^J$ on ${\cal X}_{\widehat{T}-\epsilon}^{\widehat{T}}$. Repeating the same argument as the above, we obtain
\begin{eqnarray*}
   \|{\cal T}(J_1)-{\cal T}(J_2)\|_{{\cal B}_{(\widehat{T}-\epsilon,\widehat{T})}}&=&
   \left\|W^{J_1}-W^{J_2}\right\|_{{\cal B}_{(\widehat{T}-\epsilon,\widehat{T})}}
   \\[2mm]
   &\leq& \sup\limits_{\mathbf{u} \in {\cal U}(t,x),\,\mathbf{v}\in{\cal V}(t,x),\,
   (t,x)\in{\cal X}_{\widehat{T}-\epsilon}^{\widehat{T}}}
   {|\Delta Y_t|\over b_1(t,x)}
   \leq{1\over 2}\|J_1-J_2\|_{{\cal B}_{(\widehat{T}-\epsilon,\widehat{T})}},
\end{eqnarray*}

which provides that ${\cal T}$ is a contraction mapping. Hence, it admits a unique fixed point. Denote the fixed point by $\widetilde{W}$ and extend $W$ onto ${\cal X}_{\widehat{T}-\epsilon}^{\widehat{T}}$ by means of $W=\widetilde{W}$. Then $W$ is exactly the value function $W^W$ on ${\cal X}_{\widehat{T}-\epsilon}^{T}$ of the standard SCP \eqref{valuefunction2} with $J=W$,
%Noting that in the case where the value function $W$ is just the super index of $W^W$,
which implies that $W$ is also a value function on ${\cal X}_{\widehat{T}-\epsilon}^{T}$ of the original SCP \eqref{valuefunction1}.
\vspace{2mm}

Now, we come to prove the uniqueness of the value function in space ${\cal B}_{(\widehat{T}-\epsilon,T)}$ of problem \eqref{valuefunction1} on the domain ${\cal X}_{\widehat{T}-\epsilon}^{T}$. Assume $W_1,W_2\in{\cal B}_{(\widehat{T}-\epsilon,T)}$ are the value functions of SCP \eqref{valuefunction1} on ${\cal X}_{\widehat{T}-\epsilon}^{T}$. Then $W_1$ is the value function on ${\cal X}_{\widehat{T}-\epsilon}^{T}$ of the standard SCP \eqref{valuefunction2} with $J=W_1$. Thank to Assumption A1 and the DPP for the standard SCP \eqref{valuefunction2}, we know that $W_1$ is the unique value function on ${\cal X}_{\widehat{T}}^{T}$ of SCP \eqref{valuefunction2} with $J=W_1$, which is just the unique value function $W$ on ${\cal X}_{\widehat{T}}^{T}$ of SCP \eqref{valuefunction1}. It is clear that $W_2$ has the same conclusion. Hence, $W_1=W_2=W$ on ${\cal X}_{\widehat{T}}^{T}$. Moreover, since the fixed point of ${\cal T}$ is unique, we know that $W_1=W_2=\widetilde{W}$ on ${\cal X}^{\widehat{T}}_{\widehat{T}-\epsilon}$. Therefore, $W_1=W_2$  on ${\cal X}_{\widehat{T}-\epsilon}^{T}$ and the uniqueness is obvious.

Noting the fact that $\epsilon$ is independent of $\widehat{T}$, by the bootstrap method, we can prove the conclusion on the whole interval $[0,T]$.
\end{proof}

\subsection{Verification theorem}
In this subsection, we give a verification theorem, which shows the relationship between the stochastic control problem and its associated Hamilton-Jacobi-Bellman-Isaacs  equation.
%Next, we give a verification theorem, which shows the relationship between the stochastic control problem and its associated Hamilton-Jacobi-Bellman-Isaacs (HJBI, for short) equation.

\begin{theorem} \label{verification theorem}
Suppose that Assumption B1, B2, B3 and A1, A2(N), A3(N) with $q=2$ are satisfied. Moreover, there exist functions $w\in C^{1,2}({\cal X}^T)\cap C(\overline{{\cal X}^T})$, and $U^*,\,V^*\in C(\overline{{\cal X}^T})$, satisfying the following partial differential equations (PDEs)
\begin{equation}\label{pde1}
 \left\{
 \begin{array}{l}
    -{\cal L}^{U^*,V^*}w
    =f^w(\cdot,\cdot,w,D_xw\cdot\beta(\cdot,\cdot,U^*,V^*),U^*,V^*)
    \;\;\mbox{on}\;\;{\cal X}^T;
    \vspace{2mm}\\
    -{\cal L}^{u,V^*} w
    \geq f^w(\cdot,\cdot,w,D_xw\cdot\beta(\cdot,\cdot,u,V^*),u,V^*),\;\;
    \forall\;u\in {\cal U},\;\;\mbox{on}\;\;{\cal X}^T;
    \vspace{2mm}\\
    -{\cal L}^{U^*,v} w
    \leq f^w(\cdot,\cdot,w,D_xw\cdot\beta(\cdot,\cdot,U^*,v),U^*,v),\;\;
    \forall\;v\in{\cal V},\;\;\mbox{on}\;\;{\cal X}^T;
    \vspace{2mm}\\
    w=\Psi\;\;\mbox{on}\;\;\partial{\cal X}^T,
 \end{array}
 \right.
\end{equation}
where the function $f^w$ is defined as in \eqref{newfunction1} and
$$
   D_x w\cdot\beta(t,x,u,v)
   \triangleq\left(\sum\limits_{i=1}^n\partial_{x_i}w(t,x)\beta_{i1}(t,x,u,v),\cdots,
   \sum\limits_{i=1}^n\partial_{x_i}w(t,x)\beta_{im}(t,x,u,v)\right)^{\rm T}
$$
and
\begin{equation*}
   {\cal L}^{u,v} w
   \triangleq \partial_t w
   +{1\over 2}\sum\limits_{i,j=1}^n\sum\limits_{k=1}^m
   \beta_{ik}(\cdot,\cdot,u,v)\beta_{jk}(\cdot,\cdot,u,v)\partial_{x_ix_j}w
   +\sum\limits_{i=1}^n\alpha_i(\cdot,\cdot,u,v)\partial_{x_i}w\,.
\end{equation*}

Moreover, suppose that $w$ and $U^*,V^*$ satisfy the following conditions:
\begin{enumerate}
\item there exists a constant $K_3$ such that $|w(t,x)|\leq \min(K_3b_1(t,x),b_4(x))$, $\forall(t,x)\in \overline{{\cal X}^T}$;

\item for any $\mathbf{u}\in{\cal U}(t,x)$ and $\mathbf{v}\in{\cal V}(t,x)$, SDE \eqref{stateequation1} has a unique strong solution if the pair of control processes is $(\mathbf{u}^{*;\mathbf{v}},\mathbf{v})$ or $(\mathbf{u},\mathbf{v}^{*;\mathbf{u}})$ with $\mathbf{u}^{*;\mathbf{v}}=U^*(\cdot,X^{t,x;\mathbf{u}^{*;\mathbf{v}},\mathbf{v}})\in {\cal U}(t,x)$ and $\mathbf{v}^{*;\mathbf{u}}=V^*(\cdot,X^{t,x;\mathbf{u},\mathbf{v}^{*;\mathbf{u}}})\in{\cal V}(t,x)$.
%$\mathbf{u}^{*;\mathbf{v}}=U^*(\cdot,X^{t,x;\mathbf{u}^{*;\mathbf{v}},\mathbf{v}}),\,\mathbf{v}^{*;\mathbf{u}}=V^*(\cdot,X^{t,x;\mathbf{u},\mathbf{v}^{*;\mathbf{u}}})$, and $\mathbf{u}^{*;\mathbf{v}}\in {\cal U}(t,x),\,\mathbf{v}^{*;\mathbf{u}}\in{\cal V}(t,x)$.
\end{enumerate}
Then $W=w$ is the unique value function of SCP \eqref{valuefunction1} in space ${\cal B}_{(0,T)}$, and $(\mathbf{u}^*,\mathbf{v}^*)$ is the optimal control strategy with  $\mathbf{u}^*=U^*(\cdot,X^{t,x;\mathbf{u}^*,\mathbf{v}^*})$ and $\mathbf{v}^*=V^*(\cdot,X^{t,x;\mathbf{u}^*,\mathbf{v}^*})$.
\end{theorem}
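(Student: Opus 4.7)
The plan is to reduce the verification to a fixed-point identification: if $w$ coincides with the value function $W^w$ of the auxiliary standard SCP \eqref{valuefunction2} with $J=w$, then $w$ is a fixed point of the contraction $\mathcal{T}\colon J\mapsto W^J$ built in the proof of Theorem~\ref{maintheorem1}, and the uniqueness of that fixed point forces $W=w$. Condition~(1) together with A2(N) and B3 ensures $w\in\mathcal{B}_{(0,T)}$, so $w$ is an admissible candidate; the Theorem~\ref{maintheorem1} contraction-plus-bootstrap argument extends from A2(1)/A3(1) to A2(N)/A3(N) by running the $\epsilon$-contraction separately on each of the $N$ subintervals $[(i-1)T/N,iT/N]$ and gluing.

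The workhorse is Itô's formula applied to $w(s\wedge\tau,X^{t,x;\mathbf{u},\mathbf{v}}_{s\wedge\tau})$, combined with $w=\Psi$ on $\partial\mathcal{X}^T$, which yields
\[
    w(t,x)=\Psi(T\wedge\tau,X_{T\wedge\tau})
    -\int_t^{T\wedge\tau}\mathcal{L}^{\mathbf{u}_r,\mathbf{v}_r}w(r,X_r)\,dr
    -\int_t^{T\wedge\tau}D_xw\cdot\beta(\cdot,X_r,\mathbf{u}_r,\mathbf{v}_r)\,dB_r.
\]
I then specialize to each line of \eqref{pde1}: \emph{(i)} with $(\mathbf{u}^*,\mathbf{v}^*)$ (admissible by Condition~(2)) the first line is an exact equality, so $\bigl(w(\cdot,X),\,D_xw\cdot\beta(\cdot,X,U^*,V^*)\bigr)$ solves BSDE \eqref{payoff2} with $J=w$, and BSDE uniqueness gives $Y^{w;t,x;\mathbf{u}^*,\mathbf{v}^*}_t=w(t,x)$; \emph{(ii)} for arbitrary $\mathbf{u}\in\mathcal{U}(t,x)$ paired with the feedback response $\mathbf{v}^{*;\mathbf{u}}$, the second line makes $w(\cdot,X)$ a supersolution and the BSDE comparison theorem (applied up to the random terminal time $T\wedge\tau$, the required integrability being supplied by Condition~(1), A2(N) and B3) yields $Y^{w;t,x;\mathbf{u},\mathbf{v}^{*;\mathbf{u}}}_t\leq w(t,x)$, whence $\sup_{\mathbf{u}}\inf_{\mathbf{v}}Y^{w;t,x;\mathbf{u},\mathbf{v}}_t\leq w(t,x)$; \emph{(iii)} symmetrically, the third line gives $Y^{w;t,x;\mathbf{u}^{*;\mathbf{v}},\mathbf{v}}_t\geq w(t,x)$ for every $\mathbf{v}\in\mathcal{V}(t,x)$.

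The matching lower bound $W^w\geq w$ is the main obstacle, because (iii) as stated directly delivers only $\inf_{\mathbf{v}}\sup_{\mathbf{u}}Y\geq w$. I would close this gap via the DPP for the standard SCP (legitimate under A1) applied on each window $[(i-1)T/N,iT/N]$, combined with the Isaacs structure built into \eqref{pde1}: since $(U^*,V^*)$ is a common saddle for the Hamiltonian in lines two and three, the upper and lower Hamiltonians agree pointwise, so the DPP inequalities on a small window upgrade to equalities. A bootstrap across the $N$ windows then promotes $W^w=w$ to $[0,T]$. Once $W^w=w$, the fixed-point uniqueness identifies $w$ with $W$, and item~(i) immediately gives $Y^{t,x;\mathbf{u}^*,\mathbf{v}^*}_t=Y^{w;t,x;\mathbf{u}^*,\mathbf{v}^*}_t=w(t,x)=W(t,x)$, so the feedback pair $(\mathbf{u}^*,\mathbf{v}^*)$ is optimal. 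Beyond the Isaacs-based upgrade, the remaining technical points to be checked are the BSDE comparison with random horizon $T\wedge\tau$ and the verification that $w(\cdot,X)$ and $D_xw\cdot\beta(\cdot,X,\mathbf{u},\mathbf{v})$ lie in $\mathbb{S}^2$ and $\mathbb{L}^2$ along the admissible trajectories built with the feedbacks $\mathbf{u}^{*;\mathbf{v}}$ and $\mathbf{v}^{*;\mathbf{u}}$.
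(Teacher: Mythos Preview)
Your overall architecture matches the paper's: apply It\^o to $w(\cdot,X)$, use the PDE inequalities together with BSDE comparison to sandwich $w$ between the game values for the auxiliary standard problem with $J=w$, conclude $w=W^w$, identify $w$ as the fixed point of the contraction from Theorem~\ref{maintheorem1}, and then bootstrap across the $N$ subintervals using A1 and the DPP for the standard problem.

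There is, however, one genuine gap. You list the verification that $D_xw\cdot\beta(\cdot,X,\mathbf{u},\mathbf{v})\in\mathbb{L}^2$ as a technical point ``to be checked'', but under the stated hypotheses it cannot be checked: there is no assumption whatsoever on $D_xw$, only the growth bound $|w|\leq\min(K_3b_1,b_4)$ from Condition~(1). The paper is explicit about this (``Unfortunately, we don't know whether $\widehat{Z}$ belongs to $\mathbb{L}^2(t,T)$'') and handles it by localization rather than verification: one introduces $\widehat{\tau}_k=T\wedge\tau\wedge\inf\{s:|\int_t^s\widehat{Z}\,dB|\geq k\}$, applies the BSDE comparison on $[t,\widehat{\tau}_k]$ between $(\widehat{Y},\widehat{Z})$ and an auxiliary BSDE $(\widetilde{Y}^k,\widetilde{Z}^k)$ with terminal datum $w(\widehat{\tau}_k,X_{\widehat{\tau}_k})$, and then passes to the limit $k\to\infty$ via dominated convergence, the domination being supplied by $|w|\leq b_4$, Assumption~B3 and A2(N), which together give $\|b_4(X)\|_{\mathbb{S}^2}<\infty$ and $w(\widehat{\tau}_k,X_{\widehat{\tau}_k})\to\Psi(T\wedge\tau,X_{T\wedge\tau})$ in $L^2$. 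This localization-plus-limit step is the technical heart of the paper's Step~1 and cannot be replaced by a direct integrability check. As for your concern about the inequality $W^w\geq w$: the paper does not route through a small-window Isaacs/DPP argument at this stage; it writes the full saddle chain $W^w\leq\sup_{\mathbf{u}}Y^{w;\mathbf{u},\mathbf{v}^{*;\mathbf{u}}}_t\leq w\leq\inf_{\mathbf{v}}Y^{w;\mathbf{u}^{*;\mathbf{v}},\mathbf{v}}_t\leq W^w$ directly from the two one-sided comparisons, reserving the DPP solely for the gluing in Step~2.
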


\begin{remark}\label{remark of main result2}
If the  generator  $f$ is independent of $y,z$, i.e.,
$$
   Y^{t,x;\mathbf{u},\mathbf{v}}_t
   =\mathbb{E}\left[\Psi\left(T\wedge\tau,X_{T\wedge\tau}^{t,x; \mathbf{u}, \mathbf{v}}\right)
   +\int_t^{T\wedge\tau} f\left(r,X_r^{t,x;\mathbf{u},\mathbf{v}},
   W\left(r,X_r^{t,x;\mathbf{u},\mathbf{v}}\right),\mathbf{u}_r,\mathbf{v}_r\right)\,dr\right].
$$
Then $q=2$ in Assumption A2(N) and A3(N) can be weaken into $q=1$. Its proof is similar to the following one.
\end{remark}

\begin{proof} We divide the proof into two steps.
\vspace{1mm}

\textbf{Step 1.} We prove the conclusion on the domain ${\cal X}_{(N-1)T/N}^{T}$. Firstly, we consider the standard SCP \eqref{valuefunction2} with $J=w$ on ${\cal X}_{(N-1)T/N}^{T}$ and prove that $w=W^w$ is the unique value function of SCP \eqref{valuefunction2}.

For any $(t,x)\in {\cal X}_{(N-1)T/N}^{T}$ and $\mathbf{u}\in{\cal U}(t,x)$, denoting $\widehat{Y}=w(\cdot,X^{t,x;\mathbf{u},\mathbf{v}^{*;\mathbf{u}}})$ and applying It\^o's formula to $\widehat{Y}$ yield
$$
   d\widehat{Y}_s=-\widehat{f}_s\,ds+\sum_{j=1}^m\widehat{Z}_{j,\,s}\,dB^j_s\,,
$$
where
$$
   \widehat{f}=-{\cal L}^{\mathbf{u},\mathbf{v}^{*;\mathbf{u}}} w\left(\cdot,X^{t,x;\mathbf{u},\mathbf{v}^{*;\mathbf{u}}}\right),\;\;
   \widehat{Z}_{j}=\sum_{i=1}^n \partial_{x_i}w\left(\cdot,X^{t,x;\mathbf{u},\mathbf{v}^{*;\mathbf{u}}}\right)
   \beta_{ij}\left(\cdot,
   X^{t,x;\mathbf{u},\mathbf{v}^{*;\mathbf{u}}},\mathbf{u},\mathbf{v}^{*;\mathbf{u}}\right).
$$

Next, we compare $\widehat{Y}$ and $Y^{w;t,x;\mathbf{u},\mathbf{v}^{*;\mathbf{u}}}$ in \eqref{payoff2} via the comparison theorem for BSDEs. Unfortunately, we don't know whether $\widehat{Z}$ belongs to $\mathbb{L}^2(t,T)$. So, we need some localization method to finish the comparison. Define
$$
   \widehat{\tau}_k=T\wedge\tau\wedge
   \inf\bigg(s\geq t:\Big|\sum\limits_{j=1}^m\int_t^s \widehat{Z}^k_{j,\,r}\,dB^j_r\Big|\leq k\bigg),\quad k=1,2,\cdots.
$$
Then $\widehat{Y}$ satisfies the following BSDE:
\begin{equation*}
   \widehat{Y}_sI_{\{t\leq s\leq \widehat{\tau}_k\}}
   +\widehat{Y}_{\widehat{\tau}_k}I_{\{\widehat{\tau}_k< s\leq T\}}
   =w\left(\widehat{\tau}_k,X_{\widehat{\tau}_k}^{t,x;\mathbf{u},\mathbf{v}^{*;\mathbf{u}}}\right)
   +\int_{s}^{T}I_{\{t\leq r\leq\widehat{\tau}_k\}}
   \bigg(\widehat{f}_r\,dr-\sum\limits_{j=1}^m\widehat{Z}_{j,\,r}\,dB^j_r\bigg)
\end{equation*}
for any $t\leq s\leq  T,\,k=1,2,\cdots$. Consider the following BSDE:
\begin{eqnarray*}
   \widetilde{Y}^k_s&=&
   w\left(\widehat{\tau}_k,
   X_{\widehat{\tau}_k}^{t,x;\mathbf{u},\mathbf{v}^{*;\mathbf{u}}}\right)
   +\int_s^T f^w\left(r,X_r^{t,x;\mathbf{u},\mathbf{v}^{*;\mathbf{u}}},
   \widetilde{Y}^k_r,
   \widetilde{Z}^k_r,\mathbf{u}_r,\mathbf{v}^{*;\mathbf{u}}_r\right)
   I_{\{t\leq r\leq\widehat{\tau}_k\}}\,dr
   \\[2mm]
   &&-\sum\limits_{j=1}^m\int_s^T \widetilde{Z}^k_{j,\,r}\,dB^j_r.
\end{eqnarray*}
From $|w(t,x)|\leq b_4(x)$ and Assumption A2(N), we know that  $w(\widehat{\tau}_k,X_{\widehat{\tau}_k}^{t,x;\mathbf{u},\mathbf{v}^{*;\mathbf{u}}})\in L_T^2$. By the same argument as in the proof of Theorem 1, we can deduce that the above BSDE has a unique solution $(\widetilde{Y}^k,\widetilde{Z}^k)\in \mathbb{S}^2(t,T)\times\mathbb{L}^2(t,T)$. Moreover, $|w(t,x)|\leq b_4(x)$ and Assumption A2(N) imply $(\widehat{Y}I_{\{t\leq\cdot\leq \widehat{\tau}_k\}} +\widehat{Y}_{\widehat{\tau}_k}I_{\{\widehat{\tau}_k<\cdot\leq T\}}, \widehat{Z}I_{\{t\leq \cdot\leq \widehat{\tau}_k\}})\in\mathbb{S}^2(t,T)\times\mathbb{L}^2(t,T)$.

The first inequality in PDE \eqref{pde1} yields
$$
   \widehat{f}_r
   =-{\cal L}^{\mathbf{u}_r,\mathbf{v}_r^{*;\mathbf{u}}} w\left(r,X_r^{t,x;\mathbf{u},\mathbf{v}^{*;\mathbf{u}}}\right)
   \geq f^w\left(r,X_r^{t,x;\mathbf{u},\mathbf{v}^{*;\mathbf{u}}},
   \widehat{Y}_r,\widehat{Z}_r,\mathbf{u}_r,\mathbf{v}^{*;\mathbf{u}}_r\right),
   \;\;\forall\;t\leq s\leq T\wedge\widehat{\tau}.
$$
Hence, by the comparison principle for BSDEs, we obtain, for any $s\in[t,T]$ and $k=1,2,\cdots$,
\begin{equation}\label{comparison1}
   \widehat{Y}_sI_{\{t\leq s\leq\widehat{\tau}_k\}}
   \geq \widetilde{Y}^k_sI_{\{t\leq s\leq\widehat{\tau}_k\}},\;\;\mbox{a.s.}
\end{equation}

On the other hand, BSDE \eqref{payoff2} gives for any $t\leq s\leq T$,
\begin{eqnarray*}
   &&\hspace{-0.8cm}Y_sI_{\{t\leq s\leq T\wedge\widehat{\tau}\}}
   +Y_{T\wedge\widehat{\tau}}I_{\{T\wedge\widehat{\tau}<s\leq T\}}
   =\int_s^Tf^w\left(r,X_r,Y_r,Z_r,\mathbf{u}_r,\mathbf{v}^{*;\mathbf{u}}_r\right)
   I_{\{{t\leq r\leq T\wedge\widehat{\tau}}\}}\,dr
   \\
   &&\hspace{5cm}-\sum\limits_{j=1}^m\int_s^T Z_{j,\,r}
   I_{\{{t\leq r\leq T\wedge\widehat{\tau}}\}}\,dB^j_r
   +\Psi\left(T\wedge\widehat{\tau},X_{T\wedge\widehat{\tau}}\right),
\end{eqnarray*}
where $X,Y$ and $Z$ denote $X^{t,x;\mathbf{u},\mathbf{v}^{*;\mathbf{u}}},  Y^{w;t,x;\mathbf{u},\mathbf{v}^{*;\mathbf{u}}}$ and $Z^{w;t,x;\mathbf{u},\mathbf{v}^{*;\mathbf{u}}}$, respectively. Applying the standard estimate for BSDEs, we have
\begin{eqnarray}\label{estimate1inthoerem2}
  \left\|\left(Y-\widetilde{Y}^k\right)
  I_{\{t\leq \cdot\leq T\wedge\widehat{\tau}\}}\right\|_{\mathbb{S}^2(t,T)}
  &\leq& C\left(\,\left\|w(\widehat{\tau}_k,X_{\widehat{\tau}_k})
  -\Psi\left(T\wedge\widehat{\tau},X_{T\wedge\widehat{\tau}}\right)\right\|_{L^2_T}\right.
  \nonumber\\
  &&\left.+\left\|f^w\left(\cdot,X,Y,Z,
  \mathbf{u},\mathbf{v}^{*;\mathbf{u}}\right)
  I_{\{{\widehat{\tau}_k\leq \cdot\leq T\wedge\widehat{\tau}}\}}\right\|_{\mathbb{L}^2(t,T)}\,\right).
\end{eqnarray}

With the fact that $\widehat{\tau}_k\rightarrow\widehat{\tau}$ a.s. as $k\rightarrow+\infty$, the boundary condition in PDE~\eqref{pde1} and the continuities of $X$ and $w$ imply that $w(\widehat{\tau}_k,X_{\widehat{\tau}_k})
-\Psi(T\wedge\widehat{\tau},X_{T\wedge\widehat{\tau}})\rightarrow0$ a.s. as $k\rightarrow+\infty$. Again, by $|w(t,x)|\leq b_4(x)$, Assumption B3 and Assumption A2(N), we get
\begin{eqnarray*}
   \left\|w\left(\widehat{\tau}_k,X_{\widehat{\tau}_k}\right)
   -\Psi\left(T\wedge\widehat{\tau},X_{T\wedge\widehat{\tau}}\right)\right\|_{L^2_T}
   \leq
   \bigg\|\sup\limits_{t\leq s\leq T\wedge\widehat{\tau}}2b_4(X_s)\bigg\|_{L^2_T}
   \leq2\|b_4(X)\|_{\mathbb{S}^2(t,T)}<+\infty\,.
\end{eqnarray*}
Then, by the dominated convergence theorem, we obtain
$$
   \left\|w\left(\widehat{\tau}_k,X_{\widehat{\tau}_k}\right)
   -\Psi\left(T\wedge\widehat{\tau},X_{T\wedge\widehat{\tau}}\right)\right\|_{L^2_T}
   \rightarrow0,\;\;\mbox{as }k\rightarrow+\infty.
$$
Moreover, it is clear that $I_{\{{\widehat{\tau}^k\leq\cdot\leq T\wedge\widehat{\tau}}\}}\rightarrow0$ a.e. on $\Omega\times[0,T]$ as $k\rightarrow+\infty$. The proof of Theorem 1 implies
\begin{equation*}
   \left\|f^w\left(\cdot,X,Y,Z,\mathbf{u},
   \mathbf{v}^{*;\mathbf{u}}\right)\right\|_{\mathbb{L}^2(t,T)}
   \leq \left\|f^w\left(\cdot,X,0,0,\mathbf{u},\mathbf{v}^{*;\mathbf{u}}\right)
   +C(|Y|+|Z|)\right\|_{\mathbb{L}^2(t,T)}
   <\infty.
\end{equation*}
Again, by dominated convergence theorem, we deduce
$$
   \left\|f^w\left(\cdot,X,Y,Z,\mathbf{u},\mathbf{v}^{*;\mathbf{u}}\right)
   I_{\{{\widehat{\tau}_k\leq \cdot\leq T\wedge\widehat{\tau}}\}}
   \right\|_{\mathbb{L}^2(t,T)}
   \rightarrow0,\;\;\mbox{as }k\rightarrow+\infty.
$$
Hence, from~\eqref{estimate1inthoerem2}, we derive
$$
   \left\|\left(Y-\widetilde{Y}^k\right)
   I_{\{t\leq \cdot\leq T\wedge\widehat{\tau}\}}\right\|_{\mathbb{S}^2(t,T)}
   \rightarrow0,\;\;\mbox{as}\;\;k\rightarrow+\infty.
$$
Recalling~\eqref{comparison1}, we conclude that $Y_s^{w;t,x;\mathbf{u},\mathbf{v}^{*;\mathbf{u}}}\leq \widehat{Y}_s$ a.s. for any $t\leq s\leq\widehat{\tau}$. Particularly,
$$
   Y_t^{w;t,x;\mathbf{u},\mathbf{v}^{*;\mathbf{u}}}
   \leq\widehat{Y}_t=w(t,x),\quad \forall\;\mathbf{u}\in{\cal U}(t,x).
$$
Repeating the similar argument as the above and via the second inequality in PDE \eqref{pde1}, we get
$$
   w(t,x)\leq Y_t^{w;t,x;\mathbf{u}^{*;\mathbf{v}},\mathbf{v}},\quad
   \forall\;\mathbf{v}\in{\cal V}(t,x).
$$
Therefore, we obtain
\begin{equation}\label{saddlepoint}
   W^w(t,x)\leq
   \sup\limits_{\mathbf{u}\in{\cal U}(t,x)}
   Y_t^{w;t,x;\mathbf{u},\mathbf{v}^{*;\mathbf{u}}}\leq w(t,x)
   \leq\inf\limits_{\mathbf{v}\in{\cal V}(t,x)}
   Y_t^{w;t,x;\mathbf{u}^{*;\mathbf{v}},\mathbf{v}}
   \leq W^w(t,x).
\end{equation}
Until now, we have proved that $w=W^w$ on ${\cal X}_{(N-1)T/N}^{T}$, which is the unique value function of SCP \eqref{valuefunction2} with $J=w$. %on ${\cal X}_{(N-1)T/N}^{T}$.
Moreover, the first equation in PDE \eqref{pde1} provides that $w(t,x)=Y_t^{w;t,x;\mathbf{u}^*,\mathbf{v}^*}$ and $(\mathbf{u}^*,\mathbf{v}^*)$ is the optimal control strategy.
%on ${\cal X}_{(N-1)T/N}^{T}$.

Noting that in the case where the value function $w$ is the super index of $W^w$, $w$ is also a value function of the original SCP \eqref{valuefunction1}. Thanks to Theorem 1, we know that the value function of SCP \eqref{valuefunction1} in ${\cal B}_{((N-1)T/N,T)}$ is unique. Since $|w(t,x)|\leq K_3b_1(t,x)$ for any $(t,x)\in\overline{\mathcal{X}^T}$, then $w\in{\cal B}_{((N-1)T/N,T)}$. Therefore, $w$ is the unique value function of SCP \eqref{valuefunction2} in ${\cal B}_{((N-1)T/N,T)}$.
\vspace{1mm}

\textbf{Step 2.} We extend the conclusion to the whole domain ${\cal X}^T$. Assume that we have proved that $w$ and $(\mathbf{u}^*,\mathbf{v}^*)$ are respectively the unique value function in space ${\cal B}_{(iT/N,T)}$ and the optimal control strategy of SCP \eqref{valuefunction1} (Problem {\bf O}, for short) on ${\cal X}^T_{iT/N}$.
Consider a new auxiliary SCP \eqref{valuefunction1} (Problem {\bf A}, for short) on ${\cal X}_{(i-1)T/N}^{iT/N}$, where the terminal time $T$ and the terminal value $\Psi(t,x)$ are replaced by $iT/N$ and $\widehat{\Psi}(t,x)=\Psi(t,x)I_{\{t<iT/N\}}+w(iT/N,x)$, respectively. Noting that $|w(t,x)|\leq b_4(x)$ for any $(t,x)\in\overline{\mathcal{X}^T}$, one can apply Theorem 1 to Problem {\bf A}, %Since $|w(t,x)|\leq b_4(x)$, $\forall(t,x)\in\overline{\mathcal{X}^T}$ and Assumption B1, B2, B3, A1, A2(N), {\blue A3(N)} are satisfied, the assumptions in Theorem 1 still hold for Problem {\bf A}.
hence, we know that Problem {\bf A} has a unique value function in space ${\cal B}_{((i-1)T/N,iT/N)}$. Moreover, repeating the argument in Step 1, we deduce that $w$ and $(\mathbf{u}^*,\mathbf{v}^*)$ are the value function and the optimal control strategy of Problem {\bf A}, respectively.
\vspace{1mm}

We first prove that $w$ is the value function in space ${\cal B}_{((i-1)T/N,T)}$ of Problem {\bf O} on ${\cal X}_{(i-1)T/N}^{T}$. From the fact $|w|\leq K_3b_1$ on ${\cal X}^{T}$,  we know $w\in{\cal B}_{((i-1)T/N,T)}$.
As $w$ is the value function of Problem {\bf A}, $w$ is also the value function of the standard SCP \eqref{valuefunction2} (Problem {\bf S}, for short) with $J=w$, the terminal time $iT/N$ and the terminal value $\widehat{\Psi}$.
Moreover, since $w$ is the value function of Problem {\bf O} on ${\cal X}^T_{iT/N}$, $w$ is also the value function of Problem {\bf S} with $J=w$ on ${\cal X}^{T}_{iT/N}$.
Then, by Assumption A1 and the DPP for the standard SCP \eqref{valuefunction2}, we know that $w$ is the value function  of Problem {\bf S} on ${\cal X}^{T}_{(i-1)T/N}$ with $J=w$. Hence, $w$ is also the value function in ${\cal B}_{((i-1)T/N,T)}$ of Problem {\bf O} on ${\cal X}_{(i-1)T/N}^{T}$.
\vspace{1mm}

Next, we establish the uniqueness of the value function in space ${\cal B}_{((i-1)T/N,T)}$ of Problem {\bf O} on ${\cal X}_{(i-1)T/N}^{T}$.
Assume $W_1,W_2\in{\cal B}_{((i-1)T/N,T)}$ are the value functions of Problem {\bf O} on ${\cal X}_{(i-1)T/N}^{T}$. Then $W_1$ is also the value function of Problem {\bf S} with $J=W_1$. Thank to Assumption A1 and the DPP for the standard SCP \eqref{valuefunction2}, we know that $W_1$ is the unique value function of Problem {\bf S} on ${\cal X}^{T}_{iT/N}$ with $J=W_1$, which is also the unique value function in space ${\cal B}_{(iT/N,T)}$ of Problem {\bf O} on ${\cal X}^{T}_{iT/N}$. Hence, $W_1=w$ on ${\cal X}^{T}_{iT/N}$.
Moreover, repeating the same argument, we know that $W_1$ is the unique value function in space ${\cal B}_{((i-1)T/N,iT/N)}$ of Problem {\bf A}, which provides $W_1=w$ on ${\cal X}_{(i-1)T/N}^{iT/N}$. It is clear that $W_2$ has the same properties. Therefore, we deduce $W_1=W_2=w$ on ${\cal X}_{(i-1)T/N}^{T}$.
Still by Assumption A1 and the DPP for the standard SCP \eqref{valuefunction2}, we know $w(t,x)=Y_t^{w;t,x;\mathbf{u}^*,\mathbf{v}^*}$ on ${\cal X}_{(i-1)T/N}^{T}$, which means that $(\mathbf{u}^*,\mathbf{v}^*)$ is the optimal control strategy on ${\cal X}_{(i-1)T/N}^{T}$.

Until now, we have proved the conclusion on $[(i-1)T/N,T]$.  Using the bootstrap method, we can extend the conclusion to the whole interval $[0,T]$.
\end{proof}

\begin{remark} From \eqref{saddlepoint}, we know that the following relation holds:
$$
   w(t,x)
   =\sup\limits_{\mathbf{u}\in{\cal U}(t,x)}\inf\limits_{\mathbf{v}\in{\cal V}(t,x)}Y^{t,x;\mathbf{u},\mathbf{v}}_t
   =\inf\limits_{\mathbf{v}\in{\cal V}(t,x)}\sup\limits_{\mathbf{u}\in{\cal U}(t,x)}Y^{t,x;\mathbf{u},\mathbf{v}}_t.
$$
\end{remark}

\section{Applications}
In this section, we apply the above results to solve two typical examples in finance.
%In this section, we will consider two applications of the above results in finance.

\subsection{Example 1}
This problem comes from \cite{Maenhout}, in which the author gave heuristically the model and its associated HJBI equation rather than the accurate model and description. We will provide the precise model and the mathematical proofs.

Suppose that the price of the risk-free asset is governed by
\begin{equation}\label{riskfreeasset1}
   S^{t,S_0}_{0,s}=S_0+\int_t^s\mu_0S^{t,S_0}_{0,r}\,dr,
\end{equation}
where $\mu_0$ is the constant risk-free interest rate. The price of the risky asset on the probability space $(\Omega,{\cal F},\mathbb{F},\mathbb{P})$ is governed by
\begin{equation*}%\label{riskyasset2}
   S^{t,S;\pmb{\phi}}_s=S+\int_{t}^s (\mu_1+\sigma\pmb{\phi}_r) S^{t,S;\pmb{\phi}}_rdr
   +\int_{t}^s \sigma S_r^{t,S;\pmb{\phi}}dB_r,
\end{equation*}
where $\mu_1$ and $\sigma$ are respectively constant reference expectation return rate and positive constant volatility rate. The ambiguity parameter $\pmb{\phi}$
%which represents the uncertainty of the market,
is a progressively measurable process with respect to $\{{\cal F}_s^t\}_{t\leq s\leq T}$ taking values in $\mathbb{R}$.

The investor invests in the risky asset according to the proportion $\pmb{\pi}$ of the wealth, consumes at the rate of $\pmb{c}$ and invests the remain wealth into the risk-free asset.  So, the wealth $X$ is expressed as following,
\begin{equation}\label{wealthsde1}
   X_s^{t,x;\pmb{\pi},\pmb{c},\pmb{\phi}}
   =x+\int_t^s\Big\{\Big[\,\mu_0+(\mu_1-\mu_0+\sigma\pmb{\phi}_r)\pmb{\pi}_r\,\Big]
   \,X_r^{t,x;\pmb{\pi},\pmb{c},\pmb{\phi}}-\pmb{c}_r\,\Big\}\,dr+\int_t^s\sigma
   \pmb{\pi}_rX_r^{t,x;\pmb{\pi},\pmb{c},\pmb{\phi}}dB_r,
\end{equation}
where $0\leq t\leq s\leq T$, $x\in(0,+\infty)$ indicates the initial value, $\pmb{\pi}$ and $\pmb{c}$ are progressively measurable processes with respect to $\{{\cal F}_s^t\}_{t\leq s\leq T}$ taking values in $\mathbb{R}$ and $(0,+\infty)$, respectively.

The investor chooses the investment and consumption strategy $(\pmb{\pi},\,\pmb{c})$ to maximize the robust expectation utility $\inf\{{\cal J}_1(t,x;\pmb{\pi},\pmb{c},\pmb{\phi}):\pmb{\phi}\in{\Phi}_1(t,x)\}$ with
\begin{eqnarray*}
   {\cal J}_1(t,x;\pmb{\pi},\pmb{c},\pmb{\phi})&\triangleq&
   \mathbb{E}\bigg\{\,\int_t^T e^{-\delta (T-s)}\,
   \Big[\,{\pmb{c}_s^{1-\gamma} \over 1-\gamma}
   +{(1-\gamma)\sigma^2\pmb{\phi}_s^2\over 2\theta}\,W\left(t,X^{t,x;\pmb{\pi},\pmb{c},\pmb{\phi}}\right)\,\Big]\,ds
   \\[2mm]
   &&\qquad+e^{-\delta (T-t)}
   {\left(X^{t,x;\pmb{\pi},\pmb{c},\pmb{\phi}}_T\right)^{1-\gamma}\over1-\gamma}\,\bigg\},
\end{eqnarray*}
where $\delta>0$ is the constant discounted rate. The first term in the bracket represents the utility from the consumption. The constant $\gamma$ is the relative risk aversion coefficient satisfying $\gamma>0$ and $\gamma\neq1$. The second term represents the entropy penalty, where the constant $\theta>0$ measures the strength of the preference for robustness. $W$ is defined as
\begin{equation}\label{valuefunction3}
   W(t,x)=\sup\limits_{(\pmb{\pi},\,\pmb{c})\in{\Pi}_1(t,x)}
   \inf\limits_{\pmb{\phi}\in{\Phi}_1(t,x)}{\cal J}_1(t,x;\pmb{\pi},\pmb{c},\pmb{\phi}).
\end{equation}

In \cite{Maenhout}, the author didn't give the exact optimal control problem nor the admissible sets ${\Phi}_1(t,x)$ and ${\Pi}_1(t,x)$. Now, we take the admissible sets as follows:
\begin{eqnarray*}\label{addmissble2}
   {\Phi}_1(t,x)&\!\!\!=\!\!\!&\{\pmb{\phi}:|\pmb{\phi}|\leq K_4\},\\[2mm]
   {\Pi}_1(t,x)&\!\!\!=\!\!\!&\{(\pmb{\pi},\pmb{c}):|\pmb{\pi}|\leq K_4,X^{t,x;\pmb{\pi},\pmb{c},\pmb{\phi}}/K_4\leq \pmb{c}\leq K_4X^{t,x;\pmb{\pi},\pmb{c},\pmb{\phi}},X^{t,x;\pmb{\pi},\pmb{c},\pmb{\phi}}\geq0\},
\end{eqnarray*}
with a large enough constant $K_4$.
\begin{remark}
Unlike in the classical case, we introduce the constant $K_4$ and suppose it is large enough in order to ensure BSDE \eqref{payoff1} admits $\mathbb{L}^2-$solution. Moreover, the optimal strategy falls in this admissible set.
\end{remark}
\begin{theorem}
Suppose that there exist functions $w\in C^{1,2}([0,T]\times (0,+\infty))$ and $\pi^*,c^*,\phi^*\in C^{0,1}([0,T]\times [0,+\infty))$, which satisfy the following PDEs:
\begin{equation*}%\label{pde2}
 \left\{
 \begin{array}{l}
   -{\cal L}_1^{\pi^*(t,x),c^*(t,x),\phi^*(t,x)}w(t,x)=f_1(w(t,x),c^*(t,x),\phi^*(t,x))\,;\vspace{2mm}\\
   -{\cal L}_1^{\pi,c,\phi^*(t,x)}w(t,x)\geq f_1(w(t,x),c,\phi^*(t,x)),\;\;\forall\;|\pi|\leq K_4,\,x/K_4\leq c\leq K_4x\,;\vspace{2mm}\\
   -{\cal L}_1^{\pi^*(t,x),c^*(t,x),\phi}w(t,x)\leq f_1(w(t,x),c^*(t,x),\phi),\;\;\forall\;|\phi|\leq K_4\,;\vspace{2mm}\\
   w(T,x)=x^{1-\gamma}/(1-\gamma)\,,
 \end{array}
 \right.
\end{equation*}
where the function $f_1$ is defined as
$$
   f_1(w,c,\phi)\triangleq -\delta w
   +{c^{1-\gamma}\over1-\gamma}+{(1-\gamma)\sigma^2\phi^2\over 2\theta}\,w
$$
and the differential operator ${\cal L}_1^{\pi,c,\phi}$ is defined as
\begin{equation*}%\label{differentialoperator2}
   {\cal L}_1^{\pi,c,\phi} w\triangleq
   \partial_t w+{1\over 2}\sigma^2\pi^2x^2\partial_{xx}w
   +\Big\{\,\Big[\,\mu_0+(\mu_1-\mu_0+\sigma\phi)\pi\,\Big]\,x-c\,\Big\}\partial_{x}w.
\end{equation*}
Moreover, assume that $w$ and $\pi^*,c^*,\phi^*$ satisfy the following conditions:
\begin{enumerate}
 \item there exists a constant $K_5$ such that
 $|w|\leq K_5x^{1-\gamma}$ on $[0,T]\times(0,+\infty)$;\smallskip

 \item $|\pi^*|\leq K_4$, $|\phi^*|\leq K_4$ and $x/K_4\leq c^*\leq K_4x$ on $[0,T]\times(0,+\infty)$.
\end{enumerate}
Then $W=w$ is the unique value function in ${\cal B}_{(0,T)}$ of SCP \eqref{valuefunction3} with $b_1(t,x)=x^{1-\gamma}$, and $(\pmb{\pi}^*,\pmb{c}^*,\pmb{\phi}^*)\triangleq
(\pi^*,c^*,\phi^*)(\cdot,X^{t,x;\pmb{\pi}^*,\pmb{c}^*,\pmb{\phi}^*})$ is the optimal control strategy.
\end{theorem}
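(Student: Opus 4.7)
The plan is to reduce the statement to a direct application of the verification theorem (Theorem~\ref{verification theorem}) with $n=m=1$, $\mathcal{X}=(0,+\infty)$, admissible sets $\mathcal{U}(t,x)=\Pi_{1}(t,x)$ and $\mathcal{V}(t,x)=\Phi_{1}(t,x)$, and SDE coefficients read off from \eqref{wealthsde1},
\[
\alpha(t,x,\pi,c,\phi)=[\mu_{0}+(\mu_{1}-\mu_{0}+\sigma\phi)\pi]x-c,\qquad \beta(t,x,\pi,c,\phi)=\sigma\pi x.
\]
I will represent the recursive utility $\mathcal{J}_{1}$ as the BSDE \eqref{payoff1} with terminal $\Psi(t,x)=x^{1-\gamma}/(1-\gamma)$ and generator
\[
f(t,x,y,z,\omega,\pi,c,\phi)=-\delta y+\frac{c^{1-\gamma}}{1-\gamma}+\frac{(1-\gamma)\sigma^{2}\phi^{2}}{2\theta}\omega,
\]
where the discount factor is absorbed by the $-\delta y$ term via the standard Feynman--Kac identity, and the weight is $b_{1}(t,x)=x^{1-\gamma}$. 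Under this identification, $f^{w}(t,x,w,\cdot,\pi,c,\phi)=f_{1}(w,c,\phi)$, so the PDE system in the statement coincides with \eqref{pde1}, and Theorem~\ref{verification theorem} is applicable once its hypotheses are verified.

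Verifying B1--B3 is largely bookkeeping. Since $f$ is linear in $y$, independent of $z$, and linear in $\omega$, B1 is immediate with $K_{1}=\delta$ and $b_{2}(r,\pi,c,\phi)=|1-\gamma|\sigma^{2}\phi^{2}/(2\theta)$, bounded above by $|1-\gamma|\sigma^{2}K_{4}^{2}/(2\theta)$. For B2, the quantity $|f(\cdot,0,0,0,\pi,c,\phi)|=c^{1-\gamma}/|1-\gamma|$ can be absorbed into $b_{3}(r,\pi,c,\phi):=c^{1-\gamma}/|1-\gamma|$, which depends only on the controls as allowed. For B3, $\Psi$ is continuous on $\overline{\mathcal{X}^{T}}$ and $|\Psi|\le b_{4}(x):=C_{4}\,x^{1-\gamma}$ with $C_{4}=\max(K_{5},1/|1-\gamma|)$; this same $b_{4}$, paired with $K_{3}=K_{5}$, delivers condition~(1) of Theorem~\ref{verification theorem}, namely $|w|\le\min(K_{3}b_{1},b_{4})$.

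The hard part is A1, A2(N) and A3(N): the SDE coefficients in $x$ are linear but the admissibility involves the wealth itself (through $\pmb{c}/X$), so uniformity over $\mathcal{U}(t,x)$ is nontrivial. The decisive observation is that on the admissible set, $\pmb{c}/X\in[K_{4}^{-1},K_{4}]$ and $|\pmb{\pi}|,|\pmb{\phi}|\le K_{4}$, so the wealth equation can be rewritten as $dX_{s}=\bar{b}_{s}X_{s}\,ds+\bar{a}_{s}X_{s}\,dB_{s}$ with $|\bar{b}_{s}|,|\bar{a}_{s}|$ bounded by constants depending only on $K_{4},\mu_{0},\mu_{1},\sigma$. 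Hence $X$ is a positive stochastic exponential, $\tau>T$ a.s., and A1 reduces to classical strong existence/uniqueness for linear SDEs with bounded coefficients plus the obvious closure of $\Pi_{1},\Phi_{1}$ under pasting. Standard Dol\'eans--Dade estimates then yield, uniformly in the admissible controls,
\[
\|X^{1-\gamma}\|_{\mathbb{S}^{2}(t,T)}\le C_{1}\,x^{1-\gamma},\qquad \|X^{1-\gamma}\|_{\mathbb{L}^{2}(t,t+\epsilon)}\le C_{2}\sqrt{\epsilon}\,x^{1-\gamma},
\]
and combining these with the boundedness of $b_{2}$ and the pointwise estimate $b_{3}\le(K_{4}^{1-\gamma}/|1-\gamma|)\,X^{1-\gamma}$ gives A2(N) for a suitable $K_{2}$ (even with $N=1$), while the $\sqrt{\epsilon}$ smallness above yields A3(N) after choosing $\epsilon$ small relative to $e^{-(K_{1}+1)^{2}T/(2N)}/60$.

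Finally, condition~(2) of Theorem~\ref{verification theorem} follows from the $C^{0,1}$ regularity of $(\pi^{*},c^{*},\phi^{*})$: the closed-loop drift and diffusion are locally Lipschitz and linear in $x$, so the SDEs driven by $(\pmb{\pi}^{*;\pmb{\phi}},\pmb{\phi})$ and by $(\pmb{\pi},\pmb{\phi}^{*;\pmb{\pi}})$ admit unique strong strictly positive solutions, and the pointwise bounds $|\pi^{*}|,|\phi^{*}|\le K_{4}$ and $x/K_{4}\le c^{*}\le K_{4}x$ force these solutions into the admissible sets. With every hypothesis verified, Theorem~\ref{verification theorem} yields $W=w$ as the unique value function in $\mathcal{B}_{(0,T)}$ of SCP~\eqref{valuefunction3} and $(\pmb{\pi}^{*},\pmb{c}^{*},\pmb{\phi}^{*})$ as the optimal strategy.
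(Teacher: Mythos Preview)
Your proposal is correct and follows essentially the same route as the paper: you cast SCP~\eqref{valuefunction3} into the general framework of SCP~\eqref{valuefunction1} with the same choices of $\alpha,\beta,\Psi,f$, verify Assumptions B1--B3 with $K_{1}=\delta$, $b_{2}\propto\phi^{2}$, $b_{3}\propto c^{1-\gamma}$, $b_{4}\propto x^{1-\gamma}$, then use the bounded-coefficient structure of $X^{1-\gamma}$ to obtain the uniform $\mathbb{S}^{2}$ and $\sqrt{\epsilon}$-small $\mathbb{L}^{2}$ estimates that yield A2(1) and A3(1), and finally use the $C^{0,1}$ regularity of $(\pi^{*},c^{*},\phi^{*})$ to get local Lipschitz closed-loop coefficients and hence Condition~2 of Theorem~\ref{verification theorem}. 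This matches the paper's proof step for step, with only cosmetic differences (your $C_{4}=\max(K_{5},1/|1-\gamma|)$ is a slightly more careful choice of $b_{4}$ than the paper's $K_{5}x^{1-\gamma}$).
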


\begin{proof}
It is clear that SCP \eqref{valuefunction3} can be described as the form of SCP \eqref{valuefunction1} with
\begin{eqnarray*}
   &&\mathbf{u}=(\pmb{\pi},\pmb{c}),\;\;
   \mathbf{v}=\pmb{\phi},\;\;{\cal X}=(0,+\infty),\;\;{\cal U}=[\,-K_4,K_4\,]\times(0,+\infty),\;\;
   {\cal V}=[\,-K_4,K_4\,],
   \\[2mm]
   &&\alpha(t,x,\pi,c,\phi)=\Big[\,\mu_0+(\mu_1-\mu_0+\sigma\phi)\pi\,\Big]\,x-c,\quad
   \beta(t,x,\pi,c,\phi)=\sigma\pi x,
   \\[2mm]
   &&\Psi={x^{1-\gamma}\over1-\gamma},\;\;f(t,x,y,z,w,\pi,c,\phi)=-\delta y
   +{c^{1-\gamma}\over 1-\gamma}+{(1-\gamma)\sigma^2\phi^2\over 2\theta}\,w.
\end{eqnarray*}

It is not difficult to verify that Assumption B1-B3 are satisfied with
$$
   K_1=\delta,\quad
   b_2(\pi,\phi)={|1-\gamma|\sigma^2\phi^2\over 2\theta},\quad
   b_3(\pi,c,\phi)={c^{1-\gamma}\over |1-\gamma|},\quad
   b_4(x)=K_5x^{1-\gamma}.
$$

For any $(t,x)\in[0,T]\times (0,+\infty)$, $(\pmb{\pi},\pmb{c})\in \Pi_1(t,x)$ and $\pmb{\phi}\in \Phi_1(t,x)$, one can easily check that the coefficients of SDE \eqref{wealthsde1} are Lipschitz continuous and linear growth with respect to $x$. Then, SDE \eqref{wealthsde1} has a unique strong solution $X^{t,x;\pmb{\pi},\pmb{c},\pmb{\phi}}$ (refer to P82 in \cite{Krylov}). Moreover, it is not difficult to check that the admissible sets $\Pi_1(t,x)$ and $\Phi_1(t,x)$  satisfy the ``switching condition''. So, Assumption A1 is satisfied.

A simple calculation yields that $\widetilde{X}=(X^{t,x;\pmb{\pi},\pmb{c},\pmb{\phi}})^{1-\gamma}$ satisfies $\widetilde{X}_t=x^{1-\gamma}$ and
\begin{equation*}
   d\widetilde{X}_s
   =(1-\gamma)\left[\,\mu_0+(\mu_1-\mu_0+\sigma\pmb{\phi}_s)\pmb{\pi}_s
   -{\pmb{c}_s\over X^{t,x;\pmb{\pi},\pmb{c},\pmb{\phi}}_s}
   -{\gamma\over 2}\sigma^2\pmb{\pi}^2_s\,\right]\,\widetilde{X}_s\,ds
   +(1-\gamma)\sigma \pmb{\pi}_s\widetilde{X}_sdB_s.
\end{equation*}
Applying the theory for SDEs  (refer to P86 in \cite{Krylov}), we get the following estimate:
\begin{equation*}%\label{estimate1inthoerem3}
   \left\|\left(X^{t,x;\pmb{\pi},\pmb{c},\pmb{\phi}}\right)^{1-\gamma}\right\|_{\mathbb{S}^2(t,T)}
   =\left\|\widetilde{X}\right\|_{\mathbb{S}^2(t,T)}
   \leq Cx^{1-\gamma},
\end{equation*}
where the constant $C$ is independent of $t,x,\pmb{\pi},\pmb{c},\pmb{\phi}$. Then, with the expressions of $b_1,b_2,b_3,b_4$, we have
\begin{eqnarray*}
   &&\left\|b_1\left(X^{t,x;\pmb{\pi},\pmb{c},\pmb{\phi}}\right)
   b_2(\pmb{\pi},\pmb{c},\pmb{\phi})\right\|_{\mathbb{L}^2(t,T)}
   +\|b_3(\pmb{\pi},\pmb{c},\pmb{\phi})\|_{\mathbb{L}^2(t,T)}
   +\left\|b_4\left(X^{t,x;\pmb{\pi},\pmb{c},\pmb{\phi}}\right)\right\|_{\mathbb{S}^2(t,T)}
   \\[2mm]
   &\leq &C\left(\left\|\left(X^{t,x;\pmb{\pi},\pmb{c},\pmb{\phi}}\right)^{1-\gamma}\right\|_{\mathbb{L}^2(t,T)}
   +\left\|\pmb{c}^{1-\gamma}\right\|_{\mathbb{L}^2(t,T)}
   +\left\|\left(X^{t,x;\pmb{\pi},\pmb{c},\pmb{\phi}}\right)^{1-\gamma}\right\|_{\mathbb{S}^2(t,T)}\right)
   \leq  \widetilde{C}x^{1-\gamma},
\end{eqnarray*}
where the constant $\widetilde{C}$ is also independent of $t,x,\pmb{\pi},\pmb{c},\pmb{\phi}$. Hence, we have testified Assumption A2(1).

Moreover, it is not difficult to deduce
$$
   \left\|b_1\left(X^{t,x;\pmb{\pi},\pmb{c},\pmb{\phi}}\right)
   b_2(\pmb{\pi},\pmb{c},\pmb{\phi})\right\|_{\mathbb{L}^2(t,t+\epsilon)}
   \leq \left\|b_1\left(X^{t,x;\pmb{\pi},\pmb{c},\pmb{\phi}}\right)
   b_2(\pmb{\pi},\pmb{c},\pmb{\phi})\right\|_{\mathbb{S}^2(t,T)})\sqrt{\epsilon}
   \leq \widetilde{C}\sqrt{\epsilon}x^{1-\gamma},
$$
where the constant $\widetilde{C}$ is independent of $t,x,\pmb{\pi},\pmb{c},\pmb{\phi}$. Until now, we have verified all conditions in Theorem 1. Therefore, we conclude that SCP \eqref{valuefunction3} has a unique value function $W$ in space ${\cal B}_{(0,T)}$.

Noting that  $|w(t,x)|\leq \min(K_5b_1(t,x),b_4(x))$ for any $(t,x)\in [0,T]\times(0,+\infty)$ and
$$
   f^w(t,x,w(t,x),\beta(t,x,u,v)\partial_xw(t,x),u,v)=f_1(w(t,x),c,\phi),
$$
we can achieve the desired result by means of Theorem 2 if we can prove Condition 2 in Theorem 2.

For any $\pmb{\phi}\in \Phi_1(t,x)$, the control process is $(\pmb{\pi}^{*;\pmb{\phi}},\pmb{c}^{*;\pmb{\phi}},\pmb{\phi})$ with $\pmb{\pi}^{*;\pmb{\phi}}=\pi^*(\cdot,
X^{t,x;\pmb{\pi}^{*;\pmb{\phi}},\pmb{c}^{*;\pmb{\phi}},\pmb{\phi}})$ and $\pmb{c}^{*;\pmb{\phi}}=c^*(\cdot,
X^{t,x;\pmb{\pi}^{*;\pmb{\phi}},\pmb{c}^{*;\pmb{\phi}},\pmb{\phi}})$, then SDE \eqref{wealthsde1} becomes
\begin{eqnarray*}
   X_s^{t,x;\pmb{\pi}^{*;\pmb{\phi}},\pmb{c}^{*;\pmb{\phi}},\pmb{\phi}}
   &=&x+\int_t^s\widetilde{\alpha}_r\left(
   X^{t,x;\pmb{\pi}^{*;\pmb{\phi}},\pmb{c}^{*;\pmb{\phi}},\pmb{\phi}}_r\right)\,dr
   +\int_t^s\widetilde{\beta}_r\left(
   X^{t,x;\pmb{\pi}^{*;\pmb{\phi}},\pmb{c}^{*;\pmb{\phi}},\pmb{\phi}}_r\right)dB_r,
\end{eqnarray*}
with the coefficients
$$
   \widetilde{\alpha}_r(x)=\Big\{\Big[\,\mu_0+(\mu_1-\mu_0+\sigma\pmb{\phi}_r)\pi^*(r,x)\,\Big]\,x
   -c^*(r,x)\,\Big\}\quad{\rm and}\quad
   \widetilde{\beta}_r(x)=\sigma\pi^*(r,x)\,.
$$

Since $\pi^*,c^*\in C^{0,1}([0,T]\times (0,+\infty))$, $|\pi^*(t,x)|\leq K_4$ and $x/K_4\leq c^*(t,x)\leq K_4x$, $\forall(t,x)\in[0,T]\times(0,+\infty)$ and $\forall t\in[0,T]$, $|\pmb{\phi}_t|\leq K_4$ a.s., we know that $\widetilde{\alpha}_r(x)$ and $\widetilde{\beta}_r(x)$ are locally Lipschitz continuous with respect to $x$, i.e., $\forall n\in\mathbb{N}^+$, there exists a constant $C_n$ such that
$$
   |\widetilde{\alpha}_r(x)-\widetilde{\alpha}_r(x^\prime)|
   +|\widetilde{\beta}_r(x)-\widetilde{\beta}_r(x^\prime)|
   \leq C_n|x-x^\prime|,\ \mbox{a.s.},\;\forall\;x,x^\prime\in[0,n],
   r\in[0,T].
$$
Also, $\widetilde{\alpha}_r(x)$ and $\widetilde{\beta}_r(x)$ are linear growth with respect to $x$, i.e., there exists a constant $C$ such that
$$
   |\widetilde{\alpha}_r(x)|+|\widetilde{\beta}_r(x)|
   \leq C(1+|x|),\ \mbox{a.s.},\;\forall\;(r,x)\in[0,T]\times(0,+\infty).
$$
By the theory for SDEs (refer to P57 in \cite{Mao}), we deduce that SDE \eqref{wealthsde1} has a unique strong solution $X^{t,x;\pmb{\pi}^{*;\pmb{\phi}},\pmb{c}^{*;\pmb{\phi}},\pmb{\phi}}$. Moreover, Condition 2 in Theorem 2 indicates $X^{t,x;\pmb{\pi}^{*;\pmb{\phi}},\pmb{c}^{*;\pmb{\phi}},\pmb{\phi}}/K_4
\leq\pmb{c}^{*;\pmb{\phi}}
\leq K_4X^{t,x;\pmb{\pi}^{*;\pmb{\phi}},\pmb{c}^{*;\pmb{\phi}},\pmb{\phi}}$ and $|\pmb{\pi}^{*;\pmb{\phi}}|\leq K_4$. That is, $(\pmb{\pi}^{*;\pmb{\phi}},\pmb{c}^{*;\pmb{\phi}})\in \Pi_1(t,x)$.

Repeating the similar argument, we can prove that for any $(\pmb{\pi},\pmb{c})\in \Pi_1(t,x)$, SDE~\eqref{wealthsde1} has a unique solution if the control process is $(\pmb{\pi},\pmb{c},\pmb{\phi}^{*;\pmb{\pi},\pmb{c}})$ and $\pmb{\phi}^{*;\pmb{\pi},\pmb{c}}\in\Phi_1(t,x)$ with $\pmb{\phi}^{*;\pmb{\pi},\pmb{c}}
=\phi^*(\cdot,X^{t,x;\pmb{\pi},\pmb{c},\pmb{\phi}^{*;\pmb{\pi},\pmb{c}}})$.

Hence, we have testified all assumptions in Theorem 2, and the desired result is obvious.
\end{proof}

According to the conclusion in \cite{Maenhout}, we give the explicit form of the value function to finish the example.

\begin{theorem}
The unique value function $W$ in the space ${\cal B}_{(0,T)}$ with $b_1(t,x)=x^{1-\gamma}$,  and the optimal strategy functions $\pi^*,\,c^*,\,\phi^*$ of SCP \eqref{valuefunction3} are as follows,
\begin{eqnarray*}
   W(t,x)=\left[\,g_1(t)\,\right]^\gamma{x^{1-\gamma}\over1-\gamma},\;\;
   \pi^*(t,x)={\mu_1-\mu_0\over\theta+\gamma\sigma^2},\;\;
   c^*(t,x)={x\over g_1(t)},\;\;
   \phi^*(t,x)={\theta(\mu_0-\mu_1)\over\sigma(\theta+\gamma\sigma^2)},
\end{eqnarray*}
\begin{equation*}
   g_1(t)=
   \left\{
   \begin{array}{l}
     {\displaystyle {1+(a_1-1)e^{a_1(t-T)}\over a_1},
     \mbox{if}\;a_1\neq0;}
     \vspace{2mm}\\
     {\displaystyle T+1-t,
     \qquad\qquad\;\;
     \mbox{if}\;a_1=0,}
   \end{array}
   \right.
   \mbox{with}\;a_1={\delta\over\gamma}
   -{1-\gamma\over\gamma}\left[\,\mu_0
   +{\gamma\sigma^2(\mu_1-\mu_0)^2\over 2(\theta+\gamma\sigma^2)}\,\right].
\end{equation*}
It is not difficult to verify that $w=W$ and $\pi^*,\,c^*,\,\phi^*$ satisfy all the assumptions in Theorem 3, provided $K_4,K_5$ are large enough. We omit its details.
\end{theorem}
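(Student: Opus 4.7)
My approach is to verify that $w(t,x) = g_1(t)^\gamma\, x^{1-\gamma}/(1-\gamma)$ together with the feedback laws $\pi^*$, $c^*$, $\phi^*$ in the statement satisfies every hypothesis of Theorem~3; Theorem~3 then delivers $W = w$ and the optimality of $(\pmb{\pi}^*,\pmb{c}^*,\pmb{\phi}^*)$. The computation is the classical Merton-type separation ansatz: writing $w = h(t)\,x^{1-\gamma}/(1-\gamma)$, the partial derivatives $\partial_t w$, $\partial_x w$, $\partial_{xx} w$ all factor through $x^{1-\gamma}$, reducing the HJBI equation to a pointwise saddle-point problem in $(\pi,c,\phi)$ with $t$-dependent coefficients coupled with an ODE for $h$.

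I would first carry out the pointwise optimization. The Hamiltonian is strictly concave in $(\pi,c)$ (quadratic coefficient $-\gamma\sigma^2 h/2$ in $\pi$, and strict concavity of $c \mapsto c^{1-\gamma}/(1-\gamma) - c\,h\,x^{-\gamma}$ for $\gamma > 0$) and strictly convex in $\phi$ (quadratic coefficient $\sigma^2 h/(2\theta)$), with all three coefficients carrying the correct sign once $h = g_1^\gamma > 0$ is established. The first-order conditions give
\[
    c^* = x/g_1(t),\qquad \pi = \frac{\mu_1-\mu_0+\sigma\phi}{\gamma\sigma^2},\qquad \phi = -\frac{\theta\pi}{\sigma},
\]
and solving the last two simultaneously yields the stated $\pi^*$ and $\phi^*$. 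Substituting $(\pi^*,c^*,\phi^*)$ back and using $h = g_1^\gamma$, $h' = \gamma g_1^{\gamma-1} g_1'$ reduces the PDE to the linear first-order ODE $g_1'(t) = a_1\,g_1(t) - 1$ with $a_1$ as in the statement and terminal condition $g_1(T) = 1$ (from $w(T,x) = x^{1-\gamma}/(1-\gamma)$). Its explicit solution is exactly the one displayed, split into two cases according to whether $a_1 = 0$; in each case $g_1$ is continuous and strictly positive on $[0,T]$.

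Finally I would verify the two side conditions of Theorem~3. Continuity and positivity of $g_1$ give $|w(t,x)| \le K_5\,x^{1-\gamma}$ with $K_5 = \max_{[0,T]} g_1^\gamma / |1-\gamma|$, which is condition~(1). For condition~(2), $\pi^*$ and $\phi^*$ are constants while $c^*(t,x)/x = 1/g_1(t)$ is uniformly bounded above and bounded below away from zero on $[0,T]$, so choosing $K_4$ larger than $\max(|\pi^*|, |\phi^*|, \max g_1, 1/\min g_1)$ makes the optimal feedbacks admissible; the closed-loop SDE~\eqref{wealthsde1} under $(\pmb{\pi}^*,\pmb{c}^*,\pmb{\phi}^*)$, and likewise under the partial feedbacks $(\pmb{\pi}^{*;\pmb{\phi}},\pmb{c}^{*;\pmb{\phi}},\pmb{\phi})$ and $(\pmb{\pi},\pmb{c},\pmb{\phi}^{*;\pmb{\pi},\pmb{c}})$, is geometric-Brownian-motion-like with Lipschitz and linear-growth coefficients, hence has a unique strong positive solution by the same argument already used in the proof of Theorem~3. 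The main obstacle is the algebraic bookkeeping in the ODE reduction --- tracking the $(1-\gamma)$ factors and the cross-contribution of the $\pi^*\phi^*$ term to the constant $a_1$ --- but no conceptual difficulty arises, so invoking Theorem~3 closes the proof.
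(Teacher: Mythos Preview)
Your proposal is correct and follows precisely the route the paper indicates: the paper itself gives no proof beyond the sentence ``It is not difficult to verify that $w=W$ and $\pi^*,c^*,\phi^*$ satisfy all the assumptions in Theorem~3 \ldots\ We omit its details,'' and your plan supplies exactly that omitted verification via the separable ansatz $w=h(t)x^{1-\gamma}/(1-\gamma)$, the pointwise saddle computation, the ODE $g_1'=a_1g_1-1$, and the check of conditions (1)--(2).
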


\subsection{Example 2}

This problem comes from \cite{Escobar}, in which the authors neither proved the rationality of the stochastic control problem nor gave the explicit forms of the value function and the optimal strategy, they only guessed the expressions of the value function and the optimal strategy. What we will do is to give their explicit forms and mathematic proofs via Theorem 2. In order to focus on the mathematical essence of the problem, we omit the jump and the derivative in the model.

Suppose that the price of the risk-free asset is the same as in \eqref{riskfreeasset1} and the price of the risky asset is described as
\begin{eqnarray}\label{riskyasset2}
   &&\hspace{-1.7cm}S^{t,S,p;\pmb{\phi}}_s=S+\int_{t}^s \Big(\mu_0+\mu_2 P^{t,p;\pmb{\phi}}_r
   +\pmb{\phi}_{1,r}\sqrt{P^{t,p;\pmb{\phi}}_r}\,\Big)  S^{t,S,p;\pmb{\phi}}_r dr
   +\int_{t}^s \sqrt{P^{t,p;\pmb{\phi}}_r}S_r^{t,S,p;\pmb{\phi}}dB^1_r,
   \\[2mm]\label{volatility1}
   &&\hspace{-1.7cm}P^{t,p;\pmb{\phi}}_s
   =p+\int_t^s\left[\kappa(\overline{P}-P^{t,p;\pmb{\phi}}_r)
   +\left(\rho\pmb{\phi}_{1,r}+\sqrt{1-\rho^2}\pmb{\phi}_{2,r}\right)\sigma\sqrt{P^{t,p;\pmb{\phi}}_r}\,\right]dr
   \nonumber\\[2mm]
   &&\quad+\int_{t}^s \sigma\sqrt{P^{t,p;\pmb{\phi}}_r}\Big(\rho
   dB^1_r+\sqrt{1-\rho^2}dB^2_r\Big),\qquad 0\leq t\leq s\leq T,
\end{eqnarray}
where $T$ is a fixed constant, $B^1$ and $B^2$ are two independent standard Brownian motions  on the probability space $(\Omega,{\cal F},\mathbb{F},\mathbb{P})$. $\sqrt{P^{t,p;\pmb{\phi}}}$ represents the stochastic volatility of the risky asset. $\mu_2$ is a constant and $\mu_2 \sqrt{P^{t,p;\pmb{\phi}}}$ represents the market price of the risk associated with $B^1$. $\kappa,\,\overline{P}$ and $\sigma$ are positive constants satisfying $2\kappa\overline{P}\geq\sigma^2$, which respectively represent the speed of mean reversion in volatility, the long-run level of volatility and the volatility-of-volatility parameter. $\rho\in[-1,1]$ is the correlation coefficient of $B^1$ and the risk of volatility. The ambiguity parameter $\pmb{\phi}=(\pmb{\phi}_1,\pmb{\phi}_2)^{\rm T}$ is the progressively measurable process with respect to $\{{\cal F}_s^t\}_{t\leq s\leq T}$ taking values in $\mathbb{R}^2$.

Let $\pmb{\pi}$ and $1-\pmb{\pi}$ be the proportions of wealth invested respectively in the risky asset and the risk-free asset.
%Let $\pmb{\pi}$ be the scale of wealth invested in the risky asset and $1-\pmb{\pi}$ be the scale of wealth invested in the risk-free asset. Then the wealth $X$ satisfies
Then the total wealth is expressed as following:
\begin{eqnarray}\label{wealthsde2}
   X_s^{t,x,p;\pmb{\pi},\pmb{\phi}}
   &=&x+\int_t^s\left[\,\mu_0+\Big(\mu_2  P^{t,p;\pmb{\phi}}_r
   +\pmb{\phi}_{1,r}\sqrt{P^{t,p;\pmb{\phi}}_r}\,\Big)\pmb{\pi}_r\,\right]\,
   X_r^{t,x,p;\pmb{\pi},\pmb{\phi}}\,dr
   \nonumber\\[2mm]
   &&+\,\int_t^s\sqrt{P^{t,p;\pmb{\phi}}_r}
   \pmb{\pi}_rX_r^{t,x,p;\pmb{\pi},\pmb{\phi}}dB^1_r, \qquad 0\leq t\leq s\leq T,
\end{eqnarray}
where $x\in(0,+\infty)$ and $\pmb{\pi}$ is \mbox{a} progressively measurable process with respect to $\{{\cal F}_s^t\}_{t\leq s\leq T}$ taking values in $\mathbb{R}$.

The investor chooses the investment strategy $\pmb{\pi}$ to maximize the robust expectation utility $\inf\{{\cal J}_2(t,x;\pmb{\pi},\pmb{\phi}):\pmb{\phi}\in{\Phi}_2(t,p)\}$ with
$$
 {\cal J}_2(t,x;\pmb{\pi},\pmb{\phi})\triangleq
 \mathbb{E}\bigg[\,{\big(X^{t,x,p;\pmb{\pi},\pmb{\phi}}_T\big)^{1-\gamma}\over 1-\gamma}
 +\int_t^T {1-\gamma\over2\theta}|\pmb{\phi}_{s}|^2\,
 W\big(s,X^{t,x,p;\pmb{\pi},\pmb{\phi}}_s,P^{t,p;\pmb{\phi}}_s\big)\,ds\,\bigg],
$$
where the first term in the brace represents the utility of the terminal wealth with the relative risk aversion coefficient $\gamma>0$ and $\gamma\neq1$, the second term represents the entropy penalty and the positive constant $\theta$ measures the strength of the preference for robustness.

Then the stochastic control problem is described as
\begin{equation}\label{valuefunction4}
   W(t,x,p)
   =\sup\limits_{\pmb{\pi}\in{\Pi}_2(t,x,p)}\inf\limits_{\pmb{\phi}\in{\Phi}_2(t,p)}
  {\cal J}_2(t,x;\pmb{\pi},\pmb{\phi}),
\end{equation}
where the admissible sets are taken as follows:
\begin{eqnarray*}
 {\Phi}_2(t,p)=\left\{\,\pmb{\phi}: |\pmb{\phi}|\leq K_\phi \sqrt{P^{t,p;\pmb{\phi}}}\right\},\quad
  {\Pi}_2(t,x,p)=\left\{\pmb{\pi}:|\pmb{\pi}|\leq K_\pi,\,X^{t,x;\pmb{\pi},\pmb{c},\pmb{\phi}}\geq0\right\},
\end{eqnarray*}
with $K_\phi$ and $K_\pi$ are large enough constants.

\begin{theorem}
Suppose that there exist $w\in C^{1,2}([0,T]\times (0,+\infty)^2)$ and $\pi^*,\phi^*$ dependent only on $t,p$ satisfying the following PDEs:
\begin{equation}\label{pde3}
 \left\{
 \begin{array}{l}
   -{\cal L}_2^{\pi^*(t,p),\phi^*(t,p)}w(t,x,p)=f_2(w(t,x,p),\phi^*(t,p));
   \vspace{2mm}\\
   -{\cal L}_2^{\pi,\phi^*(t,p)}w(t,x,p)\geq f_2(w(t,x,p),\phi^*(t,p)),\;\;\forall\;|\pi|\leq K_\pi;
   \vspace{2mm}\\
   -{\cal L}_2^{\pi^*(t,p),\phi}w(t,x,p)\leq f_2(w(t,x,p),\phi),\;\;\forall\;|\phi|\leq K_\phi\sqrt{p};
   \vspace{2mm}\\
   w(T,x,p)={x^{1-\gamma}\over 1-\gamma},
 \end{array}
 \right.
\end{equation}
where the function $f_2$ is defined as
\begin{equation*}%\label{defintion of f2}
   f_2(w,\phi)\triangleq  {|\phi|^2\over2\theta}(1-\gamma)w,
\end{equation*}
and the differential operator ${\cal L}_2^{\pi,\phi}$ is defined as
\begin{eqnarray*}%\label{differentialoperator3}
   {\cal L}_2^{\pi,\phi} w&\!\!\!\!\triangleq\!\!\!\!&\partial_t w
   +{1\over 2}\pi^2x^2p\partial_{xx}w+\sigma\rho\pi xp\partial_{xp}w
   +{1\over 2}\sigma^2p\partial_{pp}w
   +\Big[\,\mu_0+\Big(\mu_2 p+\phi_1\sqrt{p}\Big)\pi\,\Big]\,x\partial_{x}w
   \\[2mm]
   &&+\Big[\kappa\Big(\overline{P}-p\Big)
   +\Big(\rho\phi_1+\sqrt{1-\rho^2}\phi_2\Big)\sigma\sqrt{p}\,\Big]\,\partial_{p}w.
\end{eqnarray*}

Assume $2\kappa\overline{P}\geq\sigma^2$, and $w,\pi^*,\phi^*$ satisfy the following conditions:
\begin{enumerate}
 \item there exists a constant $K_6$ such that $K_6\geq1$ and $|w|\leq K_6x^{1-\gamma}e^{K_6p}$ on $[0,T]\times(0,+\infty)^2$;

 \item $|\pi^*|\leq K_\pi,\,|\phi^*|\leq K_\phi\sqrt{p}$ on $[0,T]\times[0,+\infty)$. Moreover, $\sqrt{p}\phi^*$ is Lipschitz continuous with respect to $p$.
\end{enumerate}
Then $w=W$ is the unique value function of SCP \eqref{valuefunction4} in space ${\cal B}_{(0,T)}$ with $b_1(t,x,p)=\sum_{i=0}^{N-1}2K_6\overline{w}(t-iT/N,x,p)I_{\{t\in[iT/N,(i+1)T/N)\}}$, where $\overline{w}$ is defined in Theorem A1 with $\varrho=1-\gamma,\;b=2K_6$, and $N$ is a large enough integer such that $N\geq T/\Delta t+(K_1+1)^2+ 300^2|1-\gamma|^2 K_\phi^4Te^{T}/\theta^2$ and $\overline{g}_1(T/N)\geq 1/2,\;\overline{g}_2(T/N)\geq K_6$.
Moreover, $(\pmb{\pi}^*,\pmb{\phi}^*)\triangleq
(\pi^*,\phi^*)(\cdot,P^{t,p;\pmb{\phi}^*})$ is the optimal control strategy.
\end{theorem}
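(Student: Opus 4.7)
The plan is to cast SCP \eqref{valuefunction4} into the abstract framework of SCP \eqref{valuefunction1} and then invoke Theorem \ref{verification theorem} together with the sharp Heston estimates of Theorem A1.

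\textbf{Step 1 (framework set-up).} I identify the state as the two-dimensional process $(X^{t,x,p;\pmb{\pi},\pmb{\phi}},P^{t,p;\pmb{\phi}})$ living in ${\cal X}=(0,+\infty)^2$, the control pair $\mathbf{u}=\pmb{\pi}$, $\mathbf{v}=\pmb{\phi}$, and read off $\alpha,\beta$ from \eqref{wealthsde2}--\eqref{volatility1}. Since $f=(1-\gamma)|\phi|^2 w/(2\theta)$ is independent of $(y,z)$ and linear in $w$, Assumptions B1--B3 are immediate with $K_1=0$, $b_2(t,\pi,\phi)=|1-\gamma||\phi|^2/(2\theta)$, $b_3\equiv 0$, and $b_4(x,p)=x^{1-\gamma}/|1-\gamma|$.

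\textbf{Step 2 (Assumption A1).} I apply the Yamada--Watanabe criterion to the CIR-type equation \eqref{volatility1} (H\"older-$1/2$ diffusion, linear-in-$P$ drift), using $|\pmb{\phi}|\leq K_\phi\sqrt{P^{t,p;\pmb{\phi}}}$ and the Feller condition $2\kappa\overline{P}\geq\sigma^2$ to keep $P^{t,p;\pmb{\phi}}>0$. Given $P$, the wealth SDE \eqref{wealthsde2} is linear in $X$ with $\pmb{\pi}$-bounded coefficients, so a unique strong positive $X^{t,x,p;\pmb{\pi},\pmb{\phi}}$ exists. The switching/concatenation part then follows directly from the feedback structure of ${\Pi}_2$ and ${\Phi}_2$.

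\textbf{Step 3 (Assumptions A2(N) and A3(N), the crux).} Because $b_2\leq |1-\gamma|K_\phi^2 P/(2\theta)$ and $b_4(X)\sim X^{1-\gamma}$, the bounds demanded by A2(N) and A3(N) reduce to estimates of the form
\begin{equation*}
\mathbb{E}\left[\sup_{s\in[t,(i+1)T/N]}X_s^{2(1-\gamma)}e^{2K_6 P_s}\right],\qquad \mathbb{E}\left[\int_t^{(i+1)T/N}X_r^{2(1-\gamma)}e^{2K_6 P_r}P_r^2\,dr\right].
\end{equation*}
These quantities involve exponential moments of $P$ and of $\int P_r\,dr$, which generically explode; this is precisely why the weight is chosen piecewise as $b_1(t,x,p)=\sum_i 2K_6\overline{w}(t-iT/N,x,p)I_{\{t\in[iT/N,(i+1)T/N)\}}$. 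Theorem A1 supplies the sharp local bounds on $\overline{w}$ valid on any time window of length at most $\Delta t$; choosing $N$ so that $T/N<\Delta t$ and $N\geq(K_1+1)^2+300^2|1-\gamma|^2K_\phi^4 Te^{T}/\theta^2$, together with the thresholds $\overline{g}_1(T/N)\geq 1/2$ and $\overline{g}_2(T/N)\geq K_6$, yields A2(N) directly and A3(N) with the required contraction factor $e^{-(K_1+1)^2T/(2N)}/60$.

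\textbf{Step 4 (verification).} Condition (1) of Theorem \ref{verification theorem} follows from the hypothesis $|w|\leq K_6x^{1-\gamma}e^{K_6 p}$ via the construction of $b_1$, together with the pointwise bound $|w|\leq b_4$ on the terminal slice. For Condition (2), the hypothesis that $\sqrt{p}\,\phi^*(t,p)$ is Lipschitz in $p$ turns the closed-loop equation for $P$ into a Lipschitz perturbation of CIR, which admits a unique strong positive solution; $X$ then solves a linear SDE, and the resulting pairs $(\pmb{\pi}^{*;\pmb{\phi}},\pmb{\phi})$ and $(\pmb{\pi},\pmb{\phi}^{*;\pmb{\pi}})$ fall inside the admissible sets by the bounds imposed on $\pi^*,\phi^*$. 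Applying Theorem \ref{verification theorem} concludes that $W=w$ is the unique value function in ${\cal B}_{(0,T)}$ with the prescribed $b_1$ and that $(\pmb{\pi}^*,\pmb{\phi}^*)$ is optimal. The main obstacle is Step 3: because the Heston process has no global exponential moments, the usual global $\mathbb{L}^2$/$\mathbb{S}^2$ bounds fail, and one must exploit the localized-in-time sharp estimates of Theorem A1 in tandem with the $N$-piece bootstrap structure built into Theorem \ref{verification theorem}.
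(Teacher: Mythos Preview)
Your overall architecture is right, but two concrete choices derail the argument.

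First, your $b_4(x,p)=x^{1-\gamma}/|1-\gamma|$ is too small. Condition~1 of Theorem~\ref{verification theorem} requires $|w(t,x,p)|\le b_4(x,p)$ on all of $\overline{{\cal X}^T}$, not just on the terminal slice; the hypothesis only gives $|w|\le K_6x^{1-\gamma}e^{K_6p}$, and indeed the explicit candidate in Theorem~\ref{result2 of example 2} grows exponentially in $p$. The paper takes $b_4(x,p)=K_6x^{1-\gamma}e^{K_6p}$, which is why $b_4$ must carry the exponential factor and why A2(N) for $b_4$ then also falls under the Theorem~A1 estimate.

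Second, working with $q=2$ in Step~3 produces the squared quantities $X^{2(1-\gamma)}e^{2K_6P}$ and $\int X^{2(1-\gamma)}e^{2K_6P}P^2\,dr$, whereas Theorem~A1 with the prescribed parameters $\varrho=1-\gamma$, $b=2K_6$ bounds only $\overline{w}\sim X^{1-\gamma}e^{\overline{g}_2 P}$ and $\int(P/\sqrt{s})\overline{w}\,ds$. The point you are missing is Remark~\ref{remark of main result2}: since $f$ here is independent of $(y,z)$, Assumptions~A2(N) and A3(N) need only hold with $q=1$. At $q=1$ the required $\mathbb{L}^1/\mathbb{S}^1$ norms are exactly $\mathbb{E}[\int P\,\overline{w}\,dr]$ and $\mathbb{E}[\sup\overline{w}]$, which Theorem~A1 bounds directly by $10\,\overline{w}(t,x,p)\le C\,b_1(t,x,p)$, and the contraction constant in A3(N) then comes out as $5|1-\gamma|K_\phi^2\theta^{-1}\sqrt{T/N}$, matched by the stated lower bound on $N$.

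A smaller methodological point: for Assumption~A1 the paper does not appeal to Yamada--Watanabe on \eqref{volatility1} directly. Because the Feller condition $2\kappa\overline{P}\ge\sigma^2$ guarantees positivity for the \emph{unperturbed} CIR but not obviously for the $\pmb{\phi}$-perturbed drift, the paper instead uses a Girsanov change of measure on windows of length $\widehat{\Delta t}$ (Novikov being verified via the second estimate in Theorem~A1) to reduce \eqref{volatility1}--\eqref{wealthsde2} to the standard CIR/linear pair \eqref{volatility2}--\eqref{wealthsde3}, and then bootstraps over $[t,T]$. Your direct route may be salvageable, but as written the positivity claim is not justified.
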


\begin{remark}
If $2\kappa\overline{P}\geq\sigma^2$, $P^{t,p;\pmb{0}}$ is positive a.s. (refer to \cite{Cox}). In the following proof, it will be proved that $P^{t,p;\pmb{\phi}^*}$ and  $X^{t,x,p;\pmb{\pi}^*,\pmb{\phi}^*}$ are also positive a.s. Thus, there is no boundary condition at $x=0$ in PDE \eqref{pde3}.
\end{remark}

\begin{proof}
It is clear that SCP \eqref{valuefunction4} can be described as the form of \eqref{valuefunction1} with
\begin{eqnarray*}
   &\!\!\!\!\!\!&\mathbf{u}=\pmb{\pi},\;\;
   \mathbf{v}=\pmb{\phi}=(\pmb{\phi}_1,\pmb{\phi}_2)^{\rm T},\quad
   {\cal X}=(0,+\infty)^2,\quad
   {\cal U}=[\,-K_\pi,K_\pi\,],\;\;{\cal V}=\mathbb{R},
   \\[2mm]
   &\!\!\!\!\!\!&\alpha_1(t,x,p,\pi,\phi)
   =\Big[\,\mu_0+(\mu_2 p+\phi_1\sqrt{p}\,)\pi\,\Big]\,x,\;\;
   \beta_{11}=\pi x\sqrt{p},\;\;\beta_{12}=0,\;\;
   \beta_{21}=\sigma\rho\sqrt{p},\quad
   \\[2mm]
   &\!\!\!\!\!\!&
   \beta_{22}=\sigma\sqrt{(1-\rho^2)p},\quad
   \alpha_2(t,x,p,\pi,\phi)=\kappa\left(\overline{P}-p\right)
   +\left(\rho\phi_1+\sqrt{1-\rho^2}\phi_2\right)\sigma\sqrt{p},
   \\[2mm]
   &\!\!\!\!\!\!&\Psi(t,x,p)={x^{1-\gamma}\over 1-\gamma},\quad
   f(t,x,p,y,z,w,\pi,\phi)={|\phi|^2\over2\theta}(1-\gamma)w.
\end{eqnarray*}
Set
$$
  K_1=0,\quad b_2(\pi,\phi)={|1-\gamma||\phi|^2\over 2\theta},\quad
   b_3(\pi,\phi)=0,\quad
   b_4(x,p)=K_6x^{1-\gamma}e^{K_6p},
$$
then we can check that Assumption B1-B3 are satisfied.

In order to achieve the existence of the strong solutions of SDE \eqref{volatility1}  and~\eqref{wealthsde2}, we introduce the following Girsanov's measure transformation:
\begin{equation*}%\label{girsanovtransformation}
   M^{\pmb{\phi}}_s
   =\exp\left(\,-\int_t^s \pmb{\phi}_{1,r}dB^1_r
   -\int_t^s \pmb{\phi}_{2,r}dB^2_r
   -{1\over2}\,\int_t^s|\pmb{\phi}_r|^2dr\,\right),\quad
   d\mathbb{P}^{\pmb{\phi}}=M^{\pmb{\phi}}_{t+\widehat{\Delta t}}d\mathbb{P}
\end{equation*}
for any $s\in[\,t,t+\widehat{\Delta t}\,]$ and $\pmb{\phi}$ satisfying Novikov's condition (refer to \cite{Karatzas1}), i.e.,
$$
  \mathbb{E}\bigg[\,\exp\Big(\,{1\over2}\,
  \int_t^{t+\widehat{\Delta t}}|\pmb{\phi}_s|^2ds\,\Big)\bigg]<+\infty,
$$
where $\widehat{\Delta t}$ is defined in Theorem A1. We know that $B^{\pmb{\phi},1}$ and $B^{\pmb{\phi},2}$ are Brownian motions on the probability space $(\Omega,{\cal F},\mathbb{F},\mathbb{P}^{\pmb{\phi}})$ with
$$
   B_s^{\pmb{\phi},1}\triangleq\int_t^s\pmb{\phi}_{1,r}\,dr+B^1_s,\quad
   B_s^{\pmb{\phi},2}\triangleq\int_t^s \pmb{\phi}_{2,r}\,dr+B^2_s,\quad
   \forall\;s\in\big[t,t+\widehat{\Delta t}\big].
$$
Considering the following SDEs:
\begin{eqnarray}\label{wealthsde3}
   X^{t,x,p;\pmb{\pi},\pmb{\phi}}_s&=&
   x+\int_{t}^s \left(\mu_0+\mu_2 \pmb{\pi}_rP^{t,p;\pmb{\phi}}_r\,\right)
   X^{t,x,p;\pmb{\pi},\pmb{\phi}}_rdr
   +\int_{t}^s \sqrt{P^{t,p;\pmb{\phi}}_r}\pmb{\pi}_r
   X^{t,x,p;\pmb{\pi},\pmb{\phi}}dB^{\pmb{\phi},1}_r,\qquad
   \\[2mm]\label{volatility2}
   P^{t,p;\pmb{\phi}}_s&=&
   p+\int_t^s\kappa(\overline{P}-P^{t,p;\pmb{\phi}}_r)\,dr
   +\int_{t}^s \sigma\sqrt{P^{t,p;\pmb{\phi}}_r}
   \Big(\rho dB^{\pmb{\phi},1}_r+\sqrt{1-\rho^2}dB^{\pmb{\phi},2}_r\Big).
\end{eqnarray}
For any $(t,x,p)\in[0,T]\times (0,+\infty)^2$, one can easily check that the coefficients of SDE \eqref{volatility2} satisfy locally Lipschitz condition on $(0,+\infty)$. Then, we can deduce that SDE \eqref{volatility2} has a strong solution on a local time interval via a stopping procedure. As $2\kappa\overline{P}\geq\sigma^2$, $P^{t,p;\pmb{\phi}}$ is positive a.s. (refer to \cite{Cox}). Combining with the fact that the coefficients of SDE \eqref{volatility2} satisfy the linear growth condition, we can prove that SDE \eqref{volatility2} has some strong solutions on the whole interval $[t,T]$. The uniqueness of the strong solution to SDE \eqref{volatility2} follows from the fact that the coefficients satisfy the Yamada and Watanabe condition (refer to P291 \cite{Karatzas1}).

Moreover, for any $\pmb{\pi}\in \Pi_2(t,x,p)$, SDE \eqref{wealthsde3} has a unique strong solution (refer to P360 in \cite{Karatzas1}) taking the form of
\begin{equation}\label{wealthsolution}
   X^{t,x,p;\pmb{\pi},\pmb{\phi}}_s
   =x\exp\left\{\,\int_{t}^s \left[\mu_0+\Big(\mu_2 \pmb{\pi}_r-{1\over2}\pmb{\pi}_r^2\Big)\,P^{t,p;\pmb{\phi}}_r\,\right]\,dr
   +\int_{t}^s\pmb{\pi}_r\sqrt{P^{t,p;\pmb{\phi}}_r}dB^{\pmb{\phi},1}_r\,\right\}.
\end{equation}
It is clear that on $[t,t+\widehat{\Delta t}]$, $P^{t,p;\pmb{\phi}}$ and $X^{t,x,p;\pmb{\pi},\pmb{\phi}}$ are the strong solutions of SDE \eqref{volatility1} and \eqref{wealthsde2}, respectively.
Moreover, since the strong solutions of SDE \eqref{volatility2} and~\eqref{wealthsde3} are  also the strong solutions of \eqref{volatility1} and \eqref{wealthsde2}, respectively.
Then by the uniqueness of solutions of SDE \eqref{volatility2} and~\eqref{wealthsde3}, we deduce that SDE \eqref{volatility1} and~\eqref{wealthsde2} have the unique strong solutions on $[t,t+\widehat{\Delta t}]$ for any $\pmb{\phi}$ satisfying Novikov's condition.
In fact, from Theorem A1, we know that for any $\pmb{\phi}\in\Phi_2(t,p)$,
$$
  \mathbb{E}\bigg[\exp\Big(\,{1\over2}\,\int_t^{t+\widehat{\Delta t}}|\pmb{\phi}_s|^2ds\,\Big)\bigg]\leq \mathbb{E}\bigg[\exp\Big({ K_\phi^2\over2}\int_t^{t+\widehat{{\Delta t}}}P^{t,p;\pmb{\phi}}_{s}ds\Big)\bigg]\leq 18e^p<+\infty,
$$
%which means that $\pmb{\phi}$ satisfies Novikov's condition.
which is exactly the Novikov condition.
Hence,  SDE \eqref{volatility1} and \eqref{wealthsde2} admit unique strong solutions on $[t,t+\widehat{\Delta t}]$ for any $\pmb{\phi}\in\Phi_2(t,p)$ and $\pmb{\pi}\in \Pi_2(t,x,p)$.

Since $\widehat{\Delta t}$ only depends on $\sigma,\kappa,\overline{P},K_\phi$ (see Theorem A1), repeating the same argument, we can establish the existence and uniqueness of the strong solutions to SDE \eqref{volatility1} and \eqref{wealthsde2} on $[t+\widehat{\Delta t},t+2\widehat{\Delta t}]$, with %the initial time $t+\widehat{\Delta t}$ and
the initial values $X_{t+\widehat{\Delta t}}^{t,x,p;\pmb{\pi},\pmb{\phi}}$ and $P_{t+\widehat{\Delta t}}^{t,p;\pmb{\phi}}$, respectively.

By the bootstrap method, we can prove that SDE \eqref{volatility1} and~\eqref{wealthsde2} have unique strong solutions on $[t,T]$. Moreover, it is not difficult to verify that the admissible set $\Pi_2(t,x,p)$ and $\Phi_2(t,p)$ satisfy the ``switching condition''. %So, Assumption A1 is satisfied.

For simplicity, we denote $X^{t,x,p;\pmb{\pi},\pmb{\phi}}$ and $P^{t,p;\pmb{\phi}}$ by $X$ and $P$, respectively. A simple calculation yields
\begin{eqnarray*}
   &&\|b_1(\cdot,X,P)b_2(\cdot,\pmb{\pi},\pmb{\phi})
   +b_3(\cdot,\pmb{\pi},\pmb{\phi})\|_{\mathbb{L}^1(t,(i+1)T/N)}
   +\|b_4(\cdot,X,P)\|_{\mathbb{S}^1(t,(i+1)T/N)}
   \\[2mm]
   &\leq& \mathbb{E}\left[{|1-\gamma|K_6K^2_\phi\over\theta}\sqrt{T\over N}
   \int_t^{{(i+1)T\over N}}{P_{s}\over\sqrt{s-iT/N}}
   \overline{w}\left(s-{iT\over N},X_{s},P_{s}\right)ds\right.
   \\[2mm]
   &&\left.+2K_6\sup\limits_{s\in[t,(i+1)T/N]}
   \overline{w}\left(s-{iT\over N},X_{s},P_{s}\right)\right]
   \leq C\overline{w}(t-iT/N,x,p)\leq Cb_1(t,x,p)
\end{eqnarray*}
for any $t\in[iT/N,(i+1)T/N)$, $i=0,1,\cdots,N-1$, where we have used in the last inequality Theorem  A1 and the fact that SDEs of $X,P$ are time homogenous. Hence, Assumption A2(N) with $q=1$ is satisfied.

Moreover, by Theorem  A1, we know that for any $iT/N\leq t<t+\epsilon\leq(i+1)T/N$, $i=0,1,\cdots,N-1$,
\begin{eqnarray*}
   &&(b_1(t,x,p))^{-1}
   \|b_1(\cdot,X,P)b_2(\cdot,\pmb{\pi},\pmb{\phi})\|_{\mathbb{L}^1(t,t+\epsilon)}
   \\[2mm]
   &\leq& {1\over b_1(t,x,p)}
   \mathbb{E}\left[{|1-\gamma|K_6K^2_\phi \over\theta}\sqrt{T\over N}
   \int_t^{{(i+1)T\over N}}{P_{s}\over\sqrt{s-iT/N}}
   \overline{w}\left(s-{iT\over N},X_{s},P_{s}\right)ds\right]
   \\[2mm]
   &\leq& { 5|1-\gamma|K^2_\phi \over \theta }\sqrt{T\over N}
   \leq {e^{-T/2}\over 60}\leq{e^{-(K_1+1)^2T/(2N)}\over 60},
\end{eqnarray*}
which indicates that Assumption A3(N) with $q=1$ holds.

Then, noting the following relation
$$
  f^w\left(t,x,w(t,x,p),D_{(x,p)}w\cdot\beta(t,x,u,v),u,v\right)=f_2(w(t,x,p),\phi),
$$
and $|w|\leq K_6x^{1-\gamma}e^{K_6p}\leq \min(b_1, b_4)$ on $[0,T]\times(0,+\infty)^2$, we know that it is sufficient to verify Condition 2 in Theorem 2 to get the desired conlusion.
In fact, for any $\pmb{\pi}\in \Pi_2(t,x,p)$, the drift of SDE \eqref{volatility1} with $\pmb{\phi}=\pmb{\phi}^{*;\pmb{\pi}}=\phi^*(\cdot, P^{t,p;\pmb{\phi}^*})$ satisfies Lipschitz condition, and the coefficients of SDE \eqref{volatility1} satisfy Yamada and Watanabe condition. Repeating the above argument, we know that SDE \eqref{volatility1} and \eqref{wealthsde2} have unique strong solutions. As $|\phi^*|\leq K_\phi \sqrt{p}$ on $[0,T]\times[0,+\infty)$, we have $|\pmb{\phi}^{*;\pmb{\pi}}|\leq K_\phi\sqrt{P^{t,p;\pmb{\phi}^{*;\pmb{\pi}}}}$ and $\pmb{\phi}^{*;\pmb{\pi}}\in \Phi_2(t,p)$.

For any $\pmb{\phi}\in \Phi_2(t,p)$, we can repeat the above argument, and introduce the Girsanov's measure transformation, then establish the existence and uniqueness of strong solutions of SDE \eqref{wealthsde3} and \eqref{volatility2}. Finally, we obtain the existence and uniqueness of strong solutions of SDE \eqref{volatility1} and \eqref{wealthsde2} via Girsanov's transformation. Moreover, $|\pi^*|\leq K_\pi$ implies $|\pmb{\pi}^{*;\pmb{\phi}}|\leq K_\pi$. And $X^{t,x;\pmb{\pi},\pmb{c},\pmb{\phi}}\geq0$ comes from \eqref{wealthsolution}. That is, $\pmb{\pi}^{*;\pmb{\phi}}\in\Pi_2(t,x,p)$.

Until now, we have testified all assumptions in Theorem 2 with $q=1$. Hence, by Theorem 2 and Remark \ref{remark of main result2}, the conclusion is obvious.
\end{proof}

Next, we give the explicit form of the value function to finish the example.

\begin{theorem}\label{result2 of example 2}
If $\kappa\geq\max\{\sigma^2/(2\overline{P}),2\sigma|\mu_2|/(\gamma+\theta)\}$ and $K_\pi,\,K_\phi$ are large enough, then the unique value function $W$ in ${\cal B}_{(0,T)}$ and the optimal strategy $(\pi^*,c^*,\phi^*)$ of SCP \eqref{valuefunction4} take the following forms:
\begin{eqnarray*}
   &&W(t,x,p)={x^{1-\gamma}\over
   1-\gamma}\,g_2(t)\,e^{g_3(t)p},\qquad
   \pi^*(t,x,p)={\sigma\rho(1-\gamma-\theta)g_3(t)+\mu_2(1-\gamma)\over(1-\gamma)(\gamma+\theta)},
   \\[2mm]
   &&\phi_1^*(t,x,p)=-{\sigma\rho\,g_3(t)+{\mu_2}(1-\gamma)
   \over(1-\gamma)(\gamma+\theta)}\theta\sqrt{p}\,,\qquad
   \phi_2^*(t,x,p)={-\sigma\theta\,g_3(t)\over 1-\gamma}\,\sqrt{(1-\rho^2)p}\,,
\end{eqnarray*}
where
\begin{equation}\label{timesolution1}
   g_2(t)=\exp\left\{\mu_0(1-\gamma)(T-t)
   +\kappa\overline{P}\int_t^T g_3(r)dr\right\},
\end{equation}
\begin{equation}\label{timesolution2}
   g_3(t)=
   \left\{
   \begin{array}{l}
     {\displaystyle a_2-{a_2(a_2- a_3)\over
     a_2-a_3e^{a_4(a_2-a_3)(t-T)}},\qquad\qquad
     \mbox{if}\;\;\gamma+\theta\neq1};
     \vspace{2mm}\\
     {\displaystyle {a_6\over a_5}\,\Big[\,e^{a_5(T-t)}-1\,\Big],
     \qquad\qquad\qquad\qquad
     \mbox{if}\;\;\gamma+\theta=1,}
     \vspace{2mm}\\
   \end{array}
   \right.
\end{equation}
\begin{equation*}%\label{algebrasolution}
   a_2={-a_5+\sqrt{a_5^2-4a_4a_6}\over 2a_4},\qquad
   a_3={-a_5-\sqrt{a_5^2-4a_4a_6}\over 2a_4},
\end{equation*}
 $$
   a_4={\sigma^2(1-\gamma-\theta)\over2(1-\gamma)(\gamma+\theta)}
   \Big[\rho^2+(1-\rho^2)(\gamma+\theta)\Big],\;
   a_5={\sigma\rho{\mu_2}\over\gamma+\theta}
   (1-\gamma-\theta)-\kappa, \;
   a_6={(1-\gamma){\mu_2}^2\over 2(\gamma+\theta)}.
$$
\end{theorem}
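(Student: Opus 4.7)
The plan is to take the explicit formulas in the statement of Theorem \ref{result2 of example 2} as an ansatz and verify that they satisfy every hypothesis of the preceding verification-type theorem for Example 2 (which is itself a specialization of Theorem \ref{verification theorem}); uniqueness of the value function in $\mathcal{B}_{(0,T)}$ will then force $W=w$.

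First, I would compute the partial derivatives of $w(t,x,p)=\frac{x^{1-\gamma}}{1-\gamma}g_2(t)e^{g_3(t)p}$ and exploit the factorizations $x\partial_x w=(1-\gamma)w$, $x^2\partial_{xx}w=-\gamma(1-\gamma)w$, $\partial_p w=g_3(t)w$, $\partial_{pp}w=g_3(t)^2 w$, $x\partial_{xp}w=(1-\gamma)g_3(t)w$ and $\partial_t w=\bigl[g_2'(t)/g_2(t)+g_3'(t)p\bigr]w$. Dividing the Hamiltonian $\mathcal{L}_2^{\pi,\phi}w+f_2(w,\phi)$ by $w$ then produces a scalar polynomial $\mathcal{H}(t,p;\pi,\phi_1,\phi_2)$ that is quadratic in $(\pi,\phi_1,\phi_2)$ with coefficients affine in $p$.

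Next, I would locate the saddle point of $\mathcal{H}$ by solving the three linear first-order conditions $\partial_\pi\mathcal{H}=\partial_{\phi_1}\mathcal{H}=\partial_{\phi_2}\mathcal{H}=0$. Direct elimination recovers exactly the formulas for $\pi^*,\phi_1^*,\phi_2^*$ stated in the theorem; the signs of the Hessian entries $-\gamma(1-\gamma)p$ (in $\pi$) and $(1-\gamma)/\theta$ (in $\phi$), under the regime $\gamma>0,\,\gamma\neq 1,\,\theta>0$, identify this stationary point as a genuine saddle, which yields the two HJBI inequalities in \eqref{pde3}. Substituting the optimizers back into $\mathcal{H}=0$ and separating the $p$-independent part from the $p$-linear part gives the two decoupled scalar ODEs
\begin{equation*}
g_3'(t)+a_4 g_3(t)^2+a_5 g_3(t)+a_6=0,\qquad g_3(T)=0,
\end{equation*}
\begin{equation*}
g_2'(t)+\bigl[\mu_0(1-\gamma)+\kappa\overline{P}g_3(t)\bigr]g_2(t)=0,\qquad g_2(T)=1.
\end{equation*}
The Riccati is integrated by factoring $a_4 g_3^2+a_5 g_3+a_6=a_4(g_3-a_2)(g_3-a_3)$ and separating variables, which gives \eqref{timesolution2}; the degenerate case $\gamma+\theta=1$ (so $a_4=0$) is treated as a linear ODE and produces the second branch. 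A direct integration of the $g_2$ equation then yields \eqref{timesolution1}, and $g_2(T)=1,\,g_3(T)=0$ give $w(T,x,p)=x^{1-\gamma}/(1-\gamma)$.

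Finally, I would check Conditions 1 and 2 of the preceding theorem. Since $\phi_1^*$ and $\phi_2^*$ are explicit multiples of $\sqrt{p}$ with continuous bounded coefficients in $t$, choosing $K_\phi,K_\pi$ large enough secures $|\phi^*|\leq K_\phi\sqrt{p}$ and $|\pi^*|\leq K_\pi$, and $\sqrt{p}\,\phi^*$, being linear in $p$, is Lipschitz in $p$. The bound $|w|\leq K_6 x^{1-\gamma}e^{K_6 p}$ reduces to uniform boundedness of $g_2$ and $g_3$ on $[0,T]$. The main obstacle is exactly this uniform control: a priori the Riccati for $g_3$ may blow up in finite time, in which case $g_2$ would fail to be finite on $[0,T]$. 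Under the assumption $\kappa\geq\max\{\sigma^2/(2\overline{P}),\,2\sigma|\mu_2|/(\gamma+\theta)\}$, a case analysis based on the sign of $a_4$ (equivalently, whether $\gamma+\theta\gtrless 1$) and on the discriminant $a_5^2-4a_4 a_6$ shows that the discriminant is non-negative and that the denominator $a_2-a_3 e^{a_4(a_2-a_3)(t-T)}$ appearing in \eqref{timesolution2} does not vanish on $[0,T]$, giving the required a priori bound on $|g_3|$ and hence on $g_2$, and thereby delivering the constants $K_6,K_\pi,K_\phi$ that close the verification argument.
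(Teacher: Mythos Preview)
Your proposal is correct and follows essentially the same route as the paper's proof in Appendix~B: the same exponential--affine ansatz, the same factorizations of the derivatives of $w$, first-order conditions for the saddle in $(\pi,\widehat\phi_1,\widehat\phi_2)$ yielding the stated optimizers, reduction to the Riccati ODE $g_3'+a_4g_3^2+a_5g_3+a_6=0$ and the linear ODE for $g_2$, and finally a case analysis (the paper splits into the four cases $\gamma>1$; $\gamma<1,\gamma+\theta>1$; $\gamma+\theta=1$; $\gamma+\theta<1$) showing $g_3$ stays bounded on $[0,T]$, with $\kappa\ge 2\sigma|\mu_2|/(\gamma+\theta)$ used precisely to force $a_5^2>4a_4a_6$ in the last case. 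One small point worth making explicit in your write-up: when you pass from the HJBI inequalities to the scalar inequalities for $\mathcal H$ you divide by $w$, whose sign equals that of $1/(1-\gamma)$, so the required concavity/convexity in $\pi$ and $\phi$ flips between the regimes $\gamma<1$ and $\gamma>1$; the Hessian signs $-\gamma(1-\gamma)p$ and $(1-\gamma)/\theta$ do give the correct saddle in both regimes, but the argument should acknowledge this sign change.
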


%\begin{eqnarray*}
%  a_5^2-4a_4a_6&=&\left[{\sigma\rho{\mu_2}\over\gamma+\theta}-\sigma\rho{\mu_2}
%   -\kappa\right]^2-{\sigma^2{\mu_2}^2[1-(\gamma+\theta)]\over(\gamma+\theta)^2}
%   \Big[\rho^2+(1-\rho^2)(\gamma+\theta)\Big]
%   \\[2mm]
%   &=&{\sigma^2\rho^2{\mu_2}^2\over(\gamma+\theta)^2}+\sigma^2\rho^2{\mu_2}^2+\kappa^2
%   -{2\sigma^2\rho^2{\mu_2}^2\over\gamma+\theta}-{2\sigma\rho{\mu_2}\kappa\over\gamma+\theta}+2\sigma\rho{\mu_2}\kappa
%   -{\sigma^2{\mu_2}^2\rho^2\over(\gamma+\theta)^2}
%   \\[2mm]
%   &&-{\sigma^2{\mu_2}^2(1-2\rho^2)\over \gamma+\theta}+\sigma^2{\mu_2}^2(1-\rho^2)
%   \\[2mm]
%   &=&\sigma^2{\mu_2}^2+2\sigma\rho{\mu_2}\kappa+\kappa^2-{\sigma^2{\mu_2}^2+2\sigma\rho{\mu_2}\kappa\over \gamma+\theta}
%   =\kappa^2+\sigma{\mu_2}(\sigma{\mu_2}+2\rho\kappa)\left(1-{1\over \gamma+\theta}\right).
%\end{eqnarray*}
%$$
% a_5^2-4a_4a_6\geq0\Leftrightarrow \sigma{\mu_2}(\sigma{\mu_2}+2\rho\kappa)\left({1\over \gamma+\theta}-1\right)<\kappa^2.
%$$
%
%If $\rho(\sigma{\mu_2}+2\rho\kappa)>0>\rho(\sigma{\mu_2}+\rho\kappa)$, then
%$$
% a_5<{\rho\kappa^2\over \sigma{\mu_2}+2\rho\kappa}-\kappa=-{\sigma\rho{\mu_2}+\rho^2\kappa\over \sigma\rho{\mu_2}+2\rho^2\kappa}\kappa>0.
%$$
%If $\rho(\sigma{\mu_2}+2\rho\kappa)<0$, then
%$$
% a_5>{\rho\kappa^2\over \sigma{\mu_2}+2\rho\kappa}-\kappa=-{\sigma\rho{\mu_2}+\rho^2\kappa\over \sigma\rho{\mu_2}+2\rho^2\kappa}\kappa>0.
%$$

The proof is postponed to Appendix B.

\appendix
\section{Estimates on the SDEs for Heston model}

In this section, we give some estimates on $X^{t,x,p;\pmb{\pi},\pmb{\phi}}$ in \eqref{wealthsde2} and $P^{t,p;\pmb{\phi}}$ in \eqref{volatility1}, which are important for Example 2 in Section 4.2.\smallskip

{\bf Theorem A1} \emph{(1) For any $\varrho\in\mathbb{R},b>0,|\pmb{\pi}|\leq K_\pi, |\pmb{\phi}|\leq K_\phi\sqrt{P^{t,p;\pmb{\phi}}}$, let $\overline{w}(s,x,p)=\overline{g}_1(s)x^{\varrho}\,e^{\overline{g}_2(s)p}$ with $\overline{g}_1(s)=e^{-|\kappa\overline{P}b+\mu_0\varrho|s}$ and $\overline{g}_2(s)=b-4\sqrt{s}$. Then we have the following estimation
\begin{eqnarray*}\nonumber
   \mathbb{E}\bigg[\sup\limits_{s\in[t,\Delta t]}\overline{w}
   \left(s,X^{t,x,p;\pmb{\pi},\pmb{\phi}}_{s},P^{t,p;\pmb{\phi}}_{s}\right)
   +\int_t^{\Delta t}{P^{t,p;\pmb{\phi}}_{s}\over\sqrt{s} }
   \overline{w}\left(s,X^{t,x,p;\pmb{\pi},\pmb{\phi}}_{s},
   P^{t,p;\pmb{\phi}}_{s}\right)ds\bigg]
   \leq 10\overline{w}(t,x,p)
\end{eqnarray*}
for any $(t,x,p)\in[0,\Delta t]\times(0,+\infty)^2$ with
$$
 \Delta t\triangleq
 \min\left\{ \left[\sigma^2b^2+\sigma(2|{\varrho}|K_\pi+\sqrt{2} K_\phi)b
 +({\varrho}^2+|{\varrho}|)K_\pi^2+|{\varrho}{\mu_2}| K_\pi
 +|{\varrho}|K_\phi K_\pi\right]^{-2},\ {b\over64}\right\}.
$$
Moreover, if $0\leq t\leq \widehat{\Delta t}\triangleq
\min\{(\sigma^2+\sqrt{2}\sigma K_\phi)^{-2},1/64,\ln 2/(\kappa\overline{P}),1/K_\phi^2\}$, then we have
$$
   \mathbb{E}\bigg[\exp\Big({ K_\phi^2\over2}
   \int_t^{\widehat{{\Delta t}}}P^{t,p;\pmb{\phi}}_{s}ds\Big)\bigg]
   \leq 18e^p,\;\;\forall\;p\in(0,+\infty).
$$}

\emph{(2) For any $\varrho\in\mathbb{R},T>0,|\pmb{\pi}|\leq K_\pi,|\pmb{\phi}|\leq K_\phi\sqrt{P^{t,p;\pmb{\phi}}}$, if $\kappa/\sigma\geq\overline{a}_1+\sqrt{\overline{a}_2}$ with $\overline{a}_1=|\rho{\varrho}| K_\pi+\left(|\rho|+\sqrt{1-\rho^2}\right)K_\phi$ and $\overline{a}_2=({\varrho}^2-{\varrho})^+K_\pi^2+2|{\varrho}{\mu_2}|K_\pi+2|{\varrho}|K_\phi K_\pi$, then for any constants $k,b$ satisfying $0\leq 2k\leq(\kappa/\sigma-\overline{a}_1)^2-\overline{a}_2$ and
$$
 \left({\kappa\over\sigma}-\overline{a}_1\right)-\sqrt{\Delta_1}
 \leq \sigma b
 \leq \left({\kappa\over\sigma}-\overline{a}_1\right)+\sqrt{\Delta_1},\quad
 \Delta_1=\left({\kappa\over\sigma}-\overline{a}_1\right)^2-(\overline{a}_2+2k),
$$
we have the following estimation
$$
  \sup\limits_{s\in[t,T]}
  \mathbb{E}\left[\left(X^{t,x,p;\pmb{\pi},\pmb{\phi}}_{s}\right)^\varrho
  e^{bP^{t,p;\pmb{\phi}}_{s}}\right]
  +k\mathbb{E}\left[\int_t^TP^{t,p;\pmb{\phi}}_{s}
  \left(X^{t,x,p;\pmb{\pi},\pmb{\phi}}_{s}\right)^\varrho
  e^{bP^{t,p;\pmb{\phi}}_{s}}ds\right]
  \leq 5e^{|\kappa\overline{P}b+\mu_0\varrho|T}x^{\varrho}\,e^{bp}
$$
for any $(t,x,p)\in[0,T]\times(0,+\infty)^2$. Particularly, if $k>0$, then
$$
  \mathbb{E}\bigg[\sup\limits_{s\in[t,T]}
  \Big(X^{t,x,p;\pmb{\pi},\pmb{\phi}}_{s}\Big)^\varrho
  e^{bP^{t,p;\pmb{\phi}}_{s}}\bigg]
  \leq 3\left[k+(|\varrho|K_\pi+\sigma|\rho|b)^2+\sigma^2(1-\rho^2)b^2\right]
  {e^{|\kappa\overline{P}b+\mu_0\varrho|T}\over k}\,x^{\varrho}e^{bp}.
$$}

\emph{(3) In particular, in the non-ambiguity model, for any $\varrho\in\mathbb{R},T>0$, if $\kappa/\sigma>|\rho{\varrho}|+\sqrt{({\varrho}^2-{\varrho})^++2|{\varrho}{\mu_2}|}$, then for any
$$
 b\in\bigg({\kappa/\sigma-|\rho \varrho|-\sqrt{\Delta_2}\over\sigma},
 {\kappa/\sigma-|\rho \varrho|+\sqrt{\Delta_2}\over\sigma}\bigg),$$
 with
$\Delta_2=\left({\kappa\over\sigma}-|\rho \varrho|\right)^2-\big[({\varrho}^2-{\varrho})^+
 +2|{\varrho}{\mu_2}|\big],$
there exists a constant $C$ such that
$$
  \mathbb{E}\bigg[\sup\limits_{s\in[t,T]}
  \left(S^{t,S,p}_{s}\right)^\varrho e^{bP^{t,p}_{s}}
   +\int_t^TP^{t,p}_{s}\left(S^{t,S,p}_{s}\right)^\varrho e^{bP^{t,p}_{s}}ds\bigg]
  \leq CS^{\varrho}\,e^{bp}
$$
for any $(t,S,p)\in[0,T]\times(0,+\infty)^2$, where $S^{t,S,p}$ and $P^{t,p}$ are the price of risky asset satisfying SDE \eqref{riskyasset2} and the volatility satisfying SDE \eqref{volatility1} with $\pmb{\phi}=(0,0)$, respectively.}

\begin{proof}
(1) It is not difficult to check that the functions $\overline{g}_1$ and $\overline{g}_2$ satisfy
$$
 \overline{g}^\prime_1,\overline{g}^\prime_2\leq0,\quad
 0<\overline{g}_1\leq 1,\quad{b\over2}\leq\overline{g}_2\leq b,\quad
 {\overline{g}_1^\prime\over \overline{g}_1}+{\mu_0}{\varrho}+\kappa\overline{P}\overline{g}_2\leq 0\;\;
 \mbox{on}\;[0,\Delta t].
$$
A simple computation gives
\begin{eqnarray*}
   &&\partial_s \overline{w}=\overline{g}^\prime_1(s)x^\varrho\,
   e^{\overline{g}_2(s)p}+\overline{g}_1(s)x^\varrho\,
   e^{\overline{g}_2(s)p}\overline{g}^\prime_2(s)p
   =\left[\,{\overline{g}_1^\prime(s)\over \overline{g}_1(s)}+\overline{g}^\prime_2(s)p\,\right] \overline{w},
   \\[2mm]
   &&\partial_x \overline{w}={\varrho} \overline{g}_1(s)x^{{\varrho}-1}\,e^{\overline{g}_2(s)p}={{\varrho}\overline{w}\over x},
   \quad x\partial_x \overline{w}={\varrho} \overline{w},
   \\[2mm]
   &&\partial_{xx} \overline{w}={\varrho}({\varrho}-1) \overline{g}_1(s)x^{{\varrho}-2}\,e^{\overline{g}_2(s)p}
   ={\varrho}({\varrho}-1){\overline{w}\over x^2},\quad x^2\partial_{xx} \overline{w}={\varrho}({\varrho}-1)\overline{w},
   \\[2mm]
   &&\partial_p \overline{w}=\overline{g}_1(s)x^{\varrho}\,
   e^{\overline{g}_2(s)p}\overline{g}_2(s)=\overline{g}_2(s)\overline{w},\quad
   x\partial_{px}\overline{w}=x\overline{g}_2(s)\partial_x\overline{w}={\varrho} \overline{g}_2(s)\overline{w},
   \\[2mm]
   &&\partial_{pp} \overline{w}=\partial_p(\overline{g}_2(s)\overline{w})
   =\overline{g}_2(s)\partial_p\overline{w}=\overline{g}^2_2(s)\overline{w}.
\end{eqnarray*}
Since
\begin{eqnarray*}
   {\cal L}_2^{\pi,\phi} \overline{w}&=&
   \partial_s \overline{w}+{1\over2}\pi^2x^2p\partial_{xx}\overline{w}
   +\sigma\rho\pi xp\partial_{xp}\overline{w}
   +{1\over2}\sigma^2p\partial_{pp}\overline{w}
   +\Big[\,{\mu_0}+\Big({\mu_2} p+\phi_1\sqrt{p}\Big)\pi\,\Big]\,x\partial_{x}\overline{w}
   \\[2mm]
   &&+\,\Big[\kappa\Big(\overline{P}-p\Big)+\Big(\rho\phi_1
   +\sqrt{1-\rho^2}\phi_2\Big)\sigma\sqrt{p}\,\Big]\,\partial_{p}\overline{w},
\end{eqnarray*}
we have
\begin{eqnarray*}
   {{\cal L}_2^{\pi,\phi}\overline{w}\over \overline{w}}+{p\over\sqrt{s} }
   &=&\left[{\overline{g}_1^\prime(s)\over \overline{g}_1(s)}
   +{\mu_0}{\varrho}+\kappa\overline{P}\overline{g}_2(s)\right]
   +p\left[\overline{g}_2^\prime(s)+{1\over 2}\,{\varrho}({\varrho}-1)\pi^2
   +\sigma\rho{\varrho} \overline{g}_2(s)\pi+{\varrho}{\mu_2}\pi\right.
   \\[2mm]
   &&\left.+{1\over2}\sigma^2\overline{g}^2_2(s)
   +{\varrho}\pi{\phi_1\over \sqrt{p}}-\kappa \overline{g}_2(s)
   +\sigma\rho\overline{g}_2(s){\phi_1\over \sqrt{p}}
   +\sigma\sqrt{1-\rho^2}\overline{g}_2(s){\phi_2\over \sqrt{p}}
   +{1\over\sqrt{s}}\right]
   \\[2mm]
   &\leq& p\left[-{2\over\sqrt{s}}+{1\over 2}({\varrho}^2-{\varrho})^+K_\pi^2
   +\sigma|{\varrho}|K_\pi b+|{\varrho}{\mu_2}| K_\pi+{1\over2}\sigma^2b^2
   +|{\varrho}|K_\phi K_\pi\right.
   \\[2mm]
   &&\left.+\sqrt{2}\sigma K_\phi b+{1\over\sqrt{s}}\right]\leq0,\;\quad
   \mbox{on}\;\;[0,\Delta t]\times(0,+\infty)^2,
\end{eqnarray*}
where we have used the definition of $\Delta t$ in the last inequality. By It\^o's formula, we have, for any $t\leq s\leq\Delta t$,
\begin{eqnarray}\nonumber
 0&\!\!\!\leq\!\!\!&
 \overline{w}\left(s,X^{t,x,p;\pmb{\pi},\pmb{\phi}}_{s},P^{t,p;\pmb{\phi}}_{s}\right)
 +\int_t^s {P^{t,p;\pmb{\phi}}_{r}\over\sqrt{r}} \overline{w}\left(r,X^{t,x,p;\pmb{\pi},\pmb{\phi}}_{r},P^{t,p;\pmb{\phi}}_{r}\right)dr
 \\[2mm]\nonumber
 &\!\!\!=\!\!\!&\overline{w}(t,x,p)
 +\int_t^s\left ({\cal L}_2^{\pi,\phi}\overline{w}+{p\over\sqrt{r}} \overline{w}\right)\left(r,X^{t,x,p;\pmb{\pi},\pmb{\phi}}_{r},P^{t,p;\pmb{\phi}}_{r};
 \pmb{\pi}_{r},\pmb{\phi}_{r}\right)dr
 +\int_t^s {\cal G}^1_{r}dB^1_{r}+\int_t^s {\cal G}^2_{r}dB^2_{r}
 \\[2mm]\label{equality in Appendix1}
 &\!\!\!\leq\!\!\!&\overline{w}(t,x,p)+\int_t^s {\cal G}^1_{r}dB^1_{r}
 +\int_t^s {\cal G}^2_{r}dB^2_{r},
\end{eqnarray}
with
\begin{eqnarray*}
 {\cal G}^1_{r}&=&\left[\varrho \pmb{\pi}_{r} +\sigma\rho\overline{g}_2(r)\right]
 \sqrt{P^{t,p;\pmb{\phi}}_{r}}
 \overline{w}\left(r,X^{t,x,p;\pmb{\pi},\pmb{\phi}}_{r},P^{t,p;\pmb{\phi}}_{r}\right),
 \\[2mm]
 {\cal G}^1_{r}&=&\sigma\sqrt{1-\rho^2}\overline{g}_2(r)\sqrt{P^{t,p;\pmb{\phi}}_{r}}
 \overline{w}\left(r,X^{t,x,p;\pmb{\pi},\pmb{\phi}}_{r},P^{t,p;\pmb{\phi}}_{r}\right).
\end{eqnarray*}
Since the right hand of \eqref{equality in Appendix1} is a non-negative local martingale, then Fatou's lemma implies that
\begin{equation}\label{dd}
 \mathbb{E}\bigg[
 \overline{w}\left(s,X^{t,x,p;\pmb{\pi},\pmb{\phi}}_{s},P^{t,p;\pmb{\phi}}_{s}\right)
 +\int_t^s {P^{t,p;\pmb{\phi}}_{r}\over\sqrt{r}} \overline{w}\left(r,X^{t,x,p;\pmb{\pi},\pmb{\phi}}_{r},P^{t,p;\pmb{\phi}}_{r}\right)dr
 \bigg]
 \leq \overline{w}(t,x,p)
\end{equation}
for any $0\leq t\leq s\leq \Delta t$ and any $(x,p)\in(0,+\infty)^2$.

Moreover, it is not difficult to compute that
\begin{eqnarray*}
   {2{\cal L}_2^{\pi,\phi}\sqrt{\overline{w}}\over \sqrt{\overline{w}}}
   &=&\left[{\overline{g}_1^\prime(s)\over \overline{g}_1(s)}
   +{\mu_0}{\varrho}+\kappa\overline{P}\overline{g}_2(s)\right]
   +p\left[\overline{g}_2^\prime(s)
   +{1\over 2}\,{\varrho}\left({\varrho\over2}-1\right)\pi^2
   +{1\over 2}\,\sigma\rho{\varrho} \overline{g}_2(s)\pi+{\varrho}{\mu_2}\pi\right.
      \\[2mm]
   &&\left.+{1\over4}\sigma^2\overline{g}^2_2(s)
   +{\varrho}\pi{\phi_1\over \sqrt{p}}
   -\kappa \overline{g}_2(s)+\sigma\rho\overline{g}_2(s){\phi_1\over \sqrt{p}}
   +\sigma\sqrt{1-\rho^2}\overline{g}_2(s){\phi_2\over \sqrt{p}}\right]\leq0
 \end{eqnarray*}
on $[0,\Delta t]\times(0,+\infty)^2$. Repeating the above argument, we can deduce that
\begin{eqnarray*}
 0\leq
 \sqrt{\overline{w}}\left(s,X^{t,x,p;\pmb{\pi},\pmb{\phi}}_{s},P^{t,p;\pmb{\phi}}_{s}\right)
 \leq\sqrt{\overline{w}}(t,x,p)+\int_t^s {\cal G}^3_{r}dB^1_{r}
 +\int_t^s {\cal G}^4_{r}dB^2_{r},\;\;\  0\leq t\leq s\leq \Delta t,
\end{eqnarray*}
with
\begin{eqnarray*}
   {\cal G}^3_{r}&=&{1\over2}\left[\varrho \pmb{\pi}_{r}+\sigma\rho\overline{g}_2(r)\right]
   \sqrt{P^{t,p;\pmb{\phi}}_{r}}
   \sqrt{\overline{w}}\left(r,X^{t,x,p;\pmb{\pi},\pmb{\phi}}_{r},P^{t,p;\pmb{\phi}}_{r}\right),
   \\[2mm]
   {\cal G}^4_{r}&=&{1\over2}\sigma\sqrt{1-\rho^2}\overline{g}_2(r)
   \sqrt{P^{t,p;\pmb{\phi}}_{r}}
   \sqrt{\overline{w}}\left(r,X^{t,x,p;\pmb{\pi},\pmb{\phi}}_{r},P^{t,p;\pmb{\phi}}_{r}\right).
\end{eqnarray*}

From the Burkholder-Davis-Gundy inequality, we know that for any $0\leq t\leq \Delta t$,
\begin{eqnarray*}
 &&\mathbb{E}\bigg[\sup\limits_{s\in[t,\Delta t]}
 \Big|\int_t^s {\cal G}^3_{r}dB^1_{r}\Big|^2\bigg]
 \leq4\mathbb{E}\bigg[\int_t^{\Delta t} \left|{\cal G}^3_{r}\right|^2dr\bigg]
 \\[2mm]
 &&\hspace{1.5cm}\leq(|\varrho|K_\pi+\sigma|\rho|b)^2\sqrt{\Delta t}
 \mathbb{E}\bigg[\int_t^{\Delta t}{P^{t,p;\pmb{\phi}}_{s}\over\sqrt{s} }
 \overline{w}\left(s,X^{t,x,p;\pmb{\pi},\pmb{\phi}}_{s},
 P^{t,p;\pmb{\phi}}_{s}\right)ds\bigg]
 \leq \overline{w}(t,x,p),
 \\[2mm]
 &&\mathbb{E}\bigg[\sup\limits_{s\in[t,\Delta t]}
 \Big|\int_t^s {\cal G}^4_{r}dB^2_{r}\Big|^2\bigg]
 \leq4\mathbb{E}\bigg[\int_t^{\Delta t} \left|{\cal G}^4_{r}\right|^2dr\bigg]
 \leq\overline{w}(t,x,p),
\end{eqnarray*}
where we have used \eqref{dd} and the definition of $\Delta t$. Then, we deduce that for any $(t,x,p)\in[0,\Delta t]\times(0,+\infty)^2$,
\begin{eqnarray*}
  0&\leq& \mathbb{E}\bigg[\sup\limits_{s\in[t,\Delta t]} \overline{w}\Big(s,X^{t,x,p;\pmb{\pi},\pmb{\phi}}_{s},P^{t,p;\pmb{\phi}}_{s}\Big)\bigg]
  =\mathbb{E}\bigg\{\sup\limits_{s\in[t,\Delta t]} \left[\sqrt{\overline{w}}\left(s,X^{t,x,p;\pmb{\pi},\pmb{\phi}}_{s},
  P^{t,p;\pmb{\phi}}_{s}\right)\right]^2\bigg\}
  \\[2mm]
  &\leq&3\mathbb{E}\bigg\{\sup\limits_{s\in[t,\Delta t]} \Big[ \overline{w}(t,x,p)
  +\Big|\int_t^s {\cal G}^3_{r}dB^1_{r}\Big|^2
  +\Big|\int_t^s {\cal G}^4_{r}dB^2_{r}\Big|^2\Big]\bigg\}
  \leq9\overline{w}(t,x,p).
\end{eqnarray*}

Taking $\varrho=0,b=1$ in $\overline{w}(t,x,p)$, we can check that $\widehat{{\Delta t}}\leq \Delta t,\ 2\overline{w}(t,x,p)\geq e^{K_\phi^2\widehat{\Delta t} p/2}$  and
\begin{eqnarray*}
   &&\mathbb{E}\bigg[\exp\Big({ K_\phi^2\over2}
   \int_t^{\widehat{{\Delta t}}}P^{t,p;\pmb{\phi}}_{s}ds\Big)\bigg]
   \leq\mathbb{E}\bigg[\exp\Big({K_\phi^2\widehat{\Delta t}\over2}
   \sup\limits_{s\in[t,\widehat{\Delta t}]}P^{t,p;\pmb{\phi}}_{s}\Big)\bigg]
   \\[2mm]
   &&\hspace{-4mm}=\mathbb{E}\bigg[\sup\limits_{s\in[t,\widehat{\Delta t}]}
   \exp\Big({K_\phi^2\widehat{\Delta t}\over2}P^{t,p;\pmb{\phi}}_{s}\Big)\bigg]
   \leq2\mathbb{E}\bigg[\sup\limits_{s\in[t,\widehat{\Delta t}]}
   \overline{w}\Big(s,X^{t,x,p;\pmb{\pi},\pmb{\phi}}_{s},P^{t,p;\pmb{\phi}}_{s}\Big)\bigg]
   \\[2mm]
   &&\hspace{-4mm}\leq 18\overline{w}(t,x,p)=18e^p,\quad
   \mbox{on}\;\;[0,\widehat{\Delta t}]\times(0,+\infty)^2.
\end{eqnarray*}

(2) Let $\widetilde{w}(s,x,p)=\overline{g}_3(s)x^\varrho e^{bp}$ with $ \overline{g}_3(s)=e^{|\kappa\overline{P}b+\mu_0\varrho|(T-s)}$ on $[0,T]\times(0,+\infty)^2$. Repeating the above argument, we have
\begin{eqnarray*}
   {{\cal L}_2^{\pi,\phi}\widetilde{w}\over \widetilde{w}}+kp
   &=&\left[{\overline{g}_3^\prime(s)\over \overline{g}_3(s)}
   +{\mu_0}{\varrho}+\kappa\overline{P}b\right]
   +p\left[{1\over 2}\,{\varrho}({\varrho}-1)\pi^2+\sigma\rho{\varrho}b\pi
   +{1\over2}\sigma^2b^2+{\varrho}{\mu_2} \pi+{\varrho}\pi{\phi_1\over\sqrt{p}}\right.
   \\[2mm]
   &&\left.-\kappa b+\sigma\rho\,b\,{\phi_1\over\sqrt{p}}
   +\sigma\sqrt{1-\rho^2}\,b\,{\phi_2\over\sqrt{p}}+k\right]
   \\[2mm]
   &\leq&p\left[{1\over2}\sigma^2b^2
   -\left({\kappa\over\sigma}-\overline{a}_1\right)\sigma b
   +{1\over 2}(\overline{a}_2+2k)\right]\leq0
 \end{eqnarray*}
and
\begin{eqnarray*}
  &&\mathbb{E}\left\{\left(X^{t,x,p;\pmb{\pi},\pmb{\phi}}_{s}\right)^\varrho
  e^{bP^{t,p;\pmb{\phi}}_{s}}
  + k\int_t^sP^{t,p;\pmb{\phi}}_{r}\left(X^{t,x,p;\pmb{\pi},\pmb{\phi}}_{r}\right)^\varrho
  e^{bP^{t,p;\pmb{\phi}}_{r}}dr\right\}
  \\[2mm]
  &\leq&\mathbb{E}\left[
  \widetilde{w}\left(s,X^{t,x,p;\pmb{\pi},\pmb{\phi}}_{s},P^{t,p;\pmb{\phi}}_{s}\right)
  +k\int_t^sP^{t,p;\pmb{\phi}}_{r} \widetilde{w}\left(r,X^{t,x,p;\pmb{\pi},\pmb{\phi}}_{r},P^{t,p;\pmb{\phi}}_{r}\right)dr
  \right]
  \\[2mm]
  &\leq&\widetilde{w}(t,x,p)
  \leq e^{|\kappa\overline{P}b+\mu_0\varrho|T}x^{\varrho}\,e^{bp}.
\end{eqnarray*}

Moreover, we can check that  ${\cal L}_2^{\pi,\phi}\sqrt{\widetilde{w}}\leq0$, and
\begin{eqnarray*}
 0\leq\sqrt{\widetilde{w}}
 \left(s,X^{t,x,p;\pmb{\pi},\pmb{\phi}}_{s},P^{t,p;\pmb{\phi}}_{s}\right)
 \leq\sqrt{\widetilde{w}}(t,x,p)+\int_t^s {\cal G}^5_{r}dB^1_{r}
 +\int_t^s {\cal G}^6_{r}dB^2_{r},\quad t\leq s\leq T,
\end{eqnarray*}
with
\begin{eqnarray*}
  {\cal G}^5_{r}&=&{1\over2}(\varrho\pmb{\pi}_{r} +\sigma\rho b)
  \sqrt{P^{t,p;\pmb{\phi}}_{r}}
  \sqrt{\widetilde{w}}\left(r,X^{t,x,p;\pmb{\pi},\pmb{\phi}}_{r},P^{t,p;\pmb{\phi}}_{r}\right),
  \\[2mm]
  {\cal G}^6_{r}&=&{1\over2}\sigma b\sqrt{1-\rho^2}\sqrt{P^{t,p;\pmb{\phi}}_{r}}
  \sqrt{\widetilde{w}}\left(r,X^{t,x,p;\pmb{\pi},\pmb{\phi}}_{r},P^{t,p;\pmb{\phi}}_{r}\right).
\end{eqnarray*}

By the Burkholder-Davis-Gundy inequality, we obtain
\begin{eqnarray*}
 &&\mathbb{E}\bigg[\sup\limits_{s\in[t,T]}
 \Big|\int_t^s {\cal G}^5_{r}dB^1_{r}\Big|^2\bigg]
 \leq4\mathbb{E}\left[\int_t^T\left|{\cal G}^5_{r}\right|^2dr\right]
 \\[2mm]
 &\leq&(|\varrho|K_\pi+\sigma|\rho|b)^2
 \mathbb{E}\left[\int_t^TP^{t,p;\pmb{\phi}}_{s}\widetilde{w}
 \left(s,X^{t,x,p;\pmb{\pi},\pmb{\phi}}_{s},P^{t,p;\pmb{\phi}}_{s}\right)ds\right]
 \\[2mm]
 &\leq&{(|\varrho|K_\pi+\sigma|\rho|b)^2\over k}\,
 e^{|\kappa\overline{P}b+\mu_0\varrho|T}x^{\varrho}\,e^{bp}\end{eqnarray*}
and
\begin{eqnarray*}\mathbb{E}\bigg[\sup\limits_{s\in[t,T]}
 \Big|\int_t^s {\cal G}^6_{r}dB^1_{r}\Big|^2\bigg]
 \leq{\sigma^2(1-\rho^2)b^2\over k}\,
 e^{|\kappa\overline{P}b+\mu_0\varrho|T}x^{\varrho}\,e^{bp}\end{eqnarray*}
 and
\begin{eqnarray*}
 0&\leq& \mathbb{E}\bigg\{\sup\limits_{s\in[t,T]}
 \Big(X^{t,x,p;\pmb{\pi},\pmb{\phi}}_{s}\Big)^\varrho
 e^{bP^{t,p;\pmb{\phi}}_{s}}\bigg\}
 \leq\mathbb{E}\bigg[\sup\limits_{s\in[t,T]}
 \widetilde{w}\left(s,X^{t,x,p;\pmb{\pi},\pmb{\phi}}_{s},P^{t,p;\pmb{\phi}}_{s}\right)\bigg]
 \\[2mm]
 &\leq&3\bigg[1+{(|\varrho|K_\pi+\sigma|\rho|b)^2\over k}
 +{\sigma^2(1-\rho^2)b^2\over k}\bigg]
 \,e^{|\kappa\overline{P}b+\mu_0\varrho|T}x^{\varrho}\,e^{bp}.
\end{eqnarray*}

(3) In the non-ambiguity model, $K_\phi=0$. Moreover, since $S^{t,S,p}=X^{t,S,p;\pmb{1},\pmb{0}}$, we only have to estimate $S^{t,S,p}$ and $P^{t,p}$, then $K_\pi$ can be chosen equal to 1. Hence,  the conclusion can be deduced from Conclusion (2).
\end{proof}

\section{Proof of Theorem 6}

In this section, we give the proof of Theorem 6.\smallskip

\noindent{\bf Proof of Theorem 6. } Let
$$
 \overline{w}(t,x,p)={1\over 1-\gamma}\,g_2(t)x^{1-\gamma}\,e^{g_3(t)p}\,,
$$
where $g_2(t)$ and $g_3(t)$ will be defined later.

A simple computation gives
\begin{eqnarray*}
   &&\partial_t w
   ={1\over 1-\gamma}\,g^\prime_2(t) x^{1-\gamma}\,e^{g_3(t)p}
   +{1\over 1-\gamma}\,g_2(t)x^{1-\gamma}\,e^{g_3(t)p}g^\prime_3(t)p
   =\left[\,{g_2^\prime(t)\over g_2(t)}+g^\prime_3(t)p\,\right] w,
   \\[2mm]
   &&\partial_x w=g_2(t)x^{-\gamma}\,e^{g_3(t)p}=(1-\gamma){w\over x},
   \quad x\partial_x w=(1-\gamma)w,
   \\[2mm]
   &&\partial_{xx} w=-\gamma g_2(t)x^{-1-\gamma}\,e^{g_3(t)p}
   =-\gamma(1-\gamma){w\over x^2},\quad
   x^2\partial_{xx} w=-\gamma(1-\gamma)w,
   \\[2mm]
   &&\partial_p w={1\over1-\gamma}\,g_2(t)x^{1-\gamma}\,e^{g_3(t)p}g_3(t)=g_3(t)w,\quad
   x\partial_{px}w=x g_3(t)\partial_xw=(1-\gamma)g_3(t)w,
   \\[2mm]
   &&\partial_{pp} w=\partial_p(g_3(t)w)=g_3(t)\partial_pw=g^2_3(t)w,
\end{eqnarray*}
and
\begin{equation*}
   {{\cal L}_2^{\pi,\phi}+f_2\over w}
   =\left[{g_2^\prime(t)\over g_2(t)}
   +{\mu_0}(1-\gamma)+\kappa\overline{P}g_3(t)\right]
   +p\Big[g_3^\prime(t)
   +{1\over2}\sigma^2g_3^2(t)-\kappa g_3(t)
   +F(\pi,\widehat{\phi}_1,\widehat{\phi}_2)\Big],
\end{equation*}
with
$\widehat{\phi}_1=\phi_1/\sqrt{p},\,\widehat{\phi}_2=\phi_2/\sqrt{p}$,
and
\begin{eqnarray*}
   F(\pi,\widehat{\phi}_1,\widehat{\phi}_2)
   &=&{1\over 2}\,\gamma(\gamma-1)\pi^2
   +\sigma\rho(1-\gamma)g_3(t)\pi+(1-\gamma){\mu_2} \pi+(1-\gamma)\widehat{\phi}_1\pi
   \\[2mm]
   &&+\,\sigma\rho\widehat{\phi}_1\,g_3(t)+\sigma\sqrt{1-\rho^2}\,\widehat{\phi}_2\,g_3(t)
   +{1-\gamma\over2\theta}\widehat{\phi}_1^2+{1-\gamma\over2\theta}\widehat{\phi}_2^2\,.
\end{eqnarray*}

It is clear that the following equalities hold at the maximiser $(\pi^*,\widehat{\phi}^*_1,\widehat{\phi}^*_2)$:
\begin{eqnarray*}
   &&(a)\ \ F_\pi=-\gamma(1-\gamma)\pi+\sigma\rho(1-\gamma)g_3(t)+(1-\gamma){\mu_2} +(1-\gamma)\widehat{\phi}_1=0,
   \\[2mm]
   &&(b)\ \ F_{\widehat{\phi}_1}=
   (1-\gamma)\pi+\sigma\rho g_3(t)+{1-\gamma\over\theta}\widehat{\phi}_1=0,
   \\[2mm]
   &&(c)\ \ F_{\widehat{\phi}_2}=
   \sigma\sqrt{1-\rho^2}\,g_3(t)+{1-\gamma\over\theta}\widehat{\phi}_2=0.
\end{eqnarray*}
Solving (c), we get
\begin{equation*}
   \widehat{\phi}_2^*
   ={-\sigma\theta\over 1-\gamma}\,\sqrt{1-\rho^2}\,g_3(t)
   ={-\sigma\theta\,g_3(t)\over 1-\gamma}\,\sqrt{1-\rho^2}.
\end{equation*}
Hence,
\begin{equation*}
    \phi_2^*
   ={-\sigma\theta\,g_3(t)\over 1-\gamma}\,\sqrt{(1-\rho^2)p}\,.
\end{equation*}
From (b), we know
\begin{equation*}
   \widehat{\phi}_1^*
   =-\theta\pi^*-{\sigma\theta\rho\over 1-\gamma}\,g_3(t).
\end{equation*}
Combining with (a), we obtain the following relation
\begin{equation*}
   -\gamma(1-\gamma)\pi^*-(1-\gamma)\theta\pi^*
   =-\sigma\rho(1-\gamma)g_3(t)-(1-\gamma){\mu_2}
   +\sigma\theta\rho g_3(t),
\end{equation*}
which yields
\begin{equation*}
  \pi^*={\sigma\rho(1-\gamma-\theta)\over(1-\gamma)(\gamma+\theta)}\,g_3(t)
   +{{\mu_2} \over \gamma+\theta}
   ={\sigma\rho(1-\gamma-\theta)g_3(t)+\mu_2(1-\gamma)\over(1-\gamma)(\gamma+\theta)}.
\end{equation*}
Therefore,
\begin{equation*}
   \widehat{\phi}_1^*=
   {-\sigma\theta\rho\over(1-\gamma)(\gamma+\theta)}\,g_3(t)
   -{\theta{\mu_2} \over \gamma+\theta}
   =-{\sigma\rho\,g_3(t)+{\mu_2}(1-\gamma)\over(1-\gamma)(\gamma+\theta)}\theta,
\end{equation*}
it follows that
\begin{equation*}
   \phi_1^*=
   -{\sigma\rho\,g_3(t)+{\mu_2}(1-\gamma)\over(1-\gamma)(\gamma+\theta)}\theta\sqrt{p}\,.
\end{equation*}

Hence, $F(\pi^*,\widehat{\phi}^*_1,\widehat{\phi}^*_2)$ can be written as
\begin{eqnarray*}
   F(\pi^*,\widehat{\phi}^*_1,\widehat{\phi}^*_2)
   =\mathfrak{F}_1 g_3^2(t)+\mathfrak{F}_2 g_3(t)+\mathfrak{F}_3
\end{eqnarray*}
with
\begin{eqnarray*}
   \mathfrak{F}_1&=&
   -{\gamma\sigma^2\rho^2(1-\gamma-\theta)^2\over 2(1-\gamma)(\gamma+\theta)^2}
   +{\sigma^2\rho^2(1-\gamma-\theta)\over \gamma+\theta}
   -{\sigma^2\theta\rho^2(1-\gamma-\theta)\over (1-\gamma)(\gamma+\theta)^2}
   -{\sigma^2\theta\rho^2\over (1-\gamma)(\gamma+\theta)}
   \\[2mm]
   &&-\,{\sigma^2\theta(1-\rho^2)\over 1-\gamma}
   +{\sigma^2\theta\rho^2\over 2(1-\gamma)(\gamma+\theta)^2}
   +{\sigma^2\theta(1-\rho^2)\over 2(1-\gamma)},
   \\[2mm]
   \mathfrak{F}_2&=&-{\gamma\sigma\rho{\mu_2} (1-\gamma-\theta)\over (\gamma+\theta)^2}
   +{\sigma\rho{\mu_2} (1-\gamma)\over \gamma+\theta}
   +{\sigma\rho{\mu_2} (1-\gamma-\theta)\over \gamma+\theta}
   -{\sigma\theta\rho{\mu_2} (2-\gamma-\theta)\over
   (\gamma+\theta)^2}
   \\[2mm]
   &&-\,{\sigma\theta\rho{\mu_2} \over \gamma+\theta}
   +\,{\sigma\theta\rho{\mu_2} \over (\gamma+\theta)^2},
   \\[2mm]
   \mathfrak{F}_3&=&-{\gamma(1-\gamma){\mu_2}^2\over 2(\gamma+\theta)^2}
   +{(1-\gamma){\mu_2}^2\over \gamma+\theta}
   -{(1-\gamma)\theta{\mu_2}^2\over (\gamma+\theta)^2}
   +{(1-\gamma)\theta{\mu_2}^2\over 2(\gamma+\theta)^2}.
\end{eqnarray*}
By a careful calculation, it is not difficult to derive
\begin{eqnarray*}
   \mathfrak{F}_1&=&
   (\mathfrak{F}_{1,1}+\mathfrak{F}_{1,3}+\mathfrak{F}_{1,6})
   +\mathfrak{F}_{1,2}+\mathfrak{F}_{1,4}+(\mathfrak{F}_{1,5}
   +\mathfrak{F}_{1,7})
   \\[2mm]
   &=&{\sigma^2\rho^2(-\gamma^2-\gamma\theta+2\gamma+2\theta-1)\over 2(1-\gamma)(\gamma+\theta)}
   +\mathfrak{F}_{1,2}+\mathfrak{F}_{1,4}
   +\Big[\,{\sigma^2\theta\rho^2\over 2(1-\gamma)}-{\sigma^2\theta\over
   2(1-\gamma)}\,\Big]
   \\[2mm]
   &=&{\sigma^2\rho^2(1-\gamma-\theta)^2\over 2(1-\gamma)(\gamma+\theta)}
   -{\sigma^2\theta\over 2(1-\gamma)}=a_4-{\sigma^2\over 2},
   \\[2mm]
   \mathfrak{F}_2&=&
   (\mathfrak{F}_{2,1}+\mathfrak{F}_{2,4}+\mathfrak{F}_{2,6})
   +\mathfrak{F}_{2,2}+\mathfrak{F}_{2,3}+\mathfrak{F}_{2,5}
   \\[2mm]
   &=&{\sigma\rho{\mu_2} (\gamma+\theta-1)\over \gamma+\theta}+\mathfrak{F}_{2,2}
   +\mathfrak{F}_{2,3}+\mathfrak{F}_{2,5}
   ={\sigma\rho{\mu_2} (1-\gamma-\theta)\over \gamma+\theta}=a_5+\kappa,
   \\[2mm]
   \mathfrak{F}_3&=&[\,\mathfrak{F}_{3,1}+(\mathfrak{F}_{3,3}
   +\mathfrak{F}_{3,4})\,]+\mathfrak{F}_{3,2}
   ={-(1-\gamma){\mu_2}^2\over 2(\gamma+\theta)}
   +{(1-\gamma){\mu_2}^2\over \gamma+\theta}
   ={(1-\gamma){\mu_2}^2\over 2(\gamma+\theta)}=a_6,
\end{eqnarray*}
with $a_4,a_5$ and $a_6$ defined in Theorem \ref{result2 of example 2}.

Noting $w(x,p,T)=x^{1-\gamma}/(1-\gamma)$, we know that $g_2(t)$ and $g_3(t)$ satisfy the following ODEs, respectively,
\begin{eqnarray}\nonumber
   &&{g_2^\prime(t)\over g_2(t)}
   +{\mu_0}(1-\gamma)+\kappa\overline{P}g_3(t)=0,\quad g_2(T)=1,
   \\[2mm]\label{ode1}
   &&g_3^\prime(t)+a_4 g_3^2(t)+a_5 g_3(t)+a_6=0,\quad g_3(T)=0.
\end{eqnarray}
It is clear that $g_2(t)$ takes the form of~\eqref{timesolution1}.

Next, we solve ODE \eqref{ode1} according to the following four cases:

{\bf Case 1. } If $\gamma>1$, then $a_4>0,a_6<0$ and $a_3<0<a_2$. So, $g_3(t)$ takes the form of \eqref{timesolution2} and $a_3<g_3\leq0$,
\begin{eqnarray}\label{eq27}
  0&\leq&g_2(t)\leq\exp\left\{[|\mu_0(1-\gamma)|+\kappa\overline{P}(|a_3|+|a_6|/\kappa)]T\right\},
  \\[2mm]\label{eq28}
  {|\phi^*|\over \sqrt{p}}&=&|\widehat{\phi}^*|\leq |\widehat{\phi}^*_1|+|\widehat{\phi}^*_2|
  \leq{\sigma\theta (|a_3|+|a_6|/\kappa)(1+\gamma+\theta)+\theta|\mu_2(1-\gamma)|\over |1-\gamma|(\gamma+\theta)},
  \\[2mm]\label{eq29}
  |\pi^*|&\leq&{\sigma|1-\gamma-\theta|(|a_3|+|a_6|/\kappa)+|\mu_2(1-\gamma)|\over|1-\gamma|(\gamma+\theta)}.
\end{eqnarray}

 {\bf Case 2. } If $\gamma<1$ and $\gamma+\theta>1$, then $a_4<0,a_6>0$ and $a_2<0<a_3$. Hence, $g_3(t)$ still takes the form of \eqref{timesolution2} and $0\leq g_3<a_3$. Moreover, the inequalities \eqref{eq27}-\eqref{eq29} still  hold.

{\bf Case 3. } If $\gamma+\theta=1$, then $a_4=0,a_5=-\kappa<0$ and $a_6>0$. Hence, $g_3(t)$ still takes the form of \eqref{timesolution2} and $0\leq g_3<-a_6/\kappa$. Moreover, the inequalities \eqref{eq27}-\eqref{eq29} still  hold.

 {\bf Case 4. } If $\gamma+\theta<1$, then $a_4>0$ and $a_6>0$. Since $\kappa\geq 2\sigma|\mu_2|/(\gamma+\theta)$, we have
$$
   a_5< -{\sigma|\mu_2| \over \gamma+\theta},\quad
   a_5^2> {\sigma^2{\mu_2}^2\over(\gamma+\theta)^2}
   \geq {\sigma^2{\mu_2}^2\over (\gamma+\theta)^2}(1-\gamma-\theta)
   [\,\rho^2+(1-\rho^2)(\gamma+\theta)\,]=4a_4a_6,
$$
and $a_2>a_3>0$. Hence, $g_3(t)$ still takes the form of \eqref{timesolution2} and $0\leq g_3<a_3$. Moreover, the inequalities \eqref{eq27}-\eqref{eq29} still  hold.

Until now, we have verified all assumptions in Theorem 5. Then, the conclusion in Theorem 6 can be obtained easily due to Theorem 5.
\hfill$\Box$\medskip

\bibliographystyle{bmc-mathphys}

\end{document}